\newcommand{\todop}[1]{\textcolor{red}{{\bf P:} #1}}
\newcommand{\todos}[1]{\textcolor{green}{{\bf S:} #1}}
\newcommand{\todoj}[1]{\textcolor{blue}{\bf J: #1}}
\newcommand{\todo}[1]{\textcolor{purple}{#1}}
\newcommand{\improve}[1]{\textcolor{purple}{\textbf{Improve: } #1}}
\renewcommand{\todop}[1]{}
\renewcommand{\todos}[1]{}
\renewcommand{\todoj}[1]{}
\renewcommand{\todo}[1]{}
\renewcommand{\improve}[1]{}
\newcommand{\R}{\mathbb{R}}
\newcommand{\inputspace}{\mathcal{X}}
\newcommand{\inputset}{X}
\newcommand{\accumerror}{\vec{err}}
\newcommand{\perturb}{\vec\delta}
\newcommand{\NN}{D}
\newcommand{\matr}[2]{#1^{(#2)}}
\newcommand{\weight}[2]{w^{(#1)}_{#2}}
\newcommand{\vecz}{\mathbf{z}}
\newcommand{\z}{z}
\newcommand{\vech}{\mathbf{h}}
\newcommand{\h}{h}
\newcommand{\vecb}{\mathbf{b}}
\renewcommand{\b}{b}
\newcommand{\veca}{\mathbf{a}}
\newcommand{\relu}[1]{ReLU(#1)}
\newcommand{\numclusters}{K_L}
\newcommand{\lorig}{l}
\newcommand{\uorig}{u}
\newcommand{\lover}{\hat\lorig}
\newcommand{\uover}{\hat\uorig}
\newcommand{\lpre}{\tilde{l}}
\newcommand{\upre}{\tilde{u}}
\newcommand{\xpre}{\tilde{z}}
\newcommand{\zpost}{\tilde{\tilde{\vecz}}}
\newcommand{\zpre}{\tilde{\vecz}}
\newcommand{\uaccdelta}{\delta^{u}_{acc}}
\newcommand{\laccdelta}{\delta^{l}_{acc}}
\begin{document}
\title{DeepAbstract: Neural Network Abstraction for Accelerating Verification}
\author{Pranav Ashok\inst{1} \and
Vahid Hashemi\inst{2} \and
Jan K\v{r}et\'{i}nsk\'{y}\inst{1} \and
Stefanie Mohr\inst{1}
}
\authorrunning{P. Ashok et al.}
\institute{Technical University of Munich, Germany \and
Audi AG, Germany
}
\maketitle              %
\begin{abstract}
While abstraction is a classic tool of verification to scale it up, it is not used very often for verifying neural networks. However, it can help with the still open task of scaling existing algorithms to state-of-the-art network architectures. 
We introduce an abstraction framework applicable to fully-connected feed-forward neural networks based on clustering of neurons that behave similarly on \emph{some} inputs. 
For the particular case of ReLU, we additionally provide error bounds incurred by
the abstraction.
We show how the abstraction reduces the size of the network, while
preserving its accuracy, and how verification results on the abstract
network can be transferred back to the original network.

\end{abstract}
\section{Introduction}\label{sec:intro}

\paragraph{Neural networks (NN)} are successfully used to solve many hard problems reasonably well in practice.
However, there is an increasing desire to use them also in safety-critical settings, such as perception in autonomous cars \cite{chen2017multi}, where reliability has to be on a very high level and that level has to be guaranteed, preferably by a rigorous proof.
This is a great challenge, in particular, since NN are naturally very susceptible to adversarial attacks, as many works have demonstrated in the recent years \cite{papernot2016limitations,akhtar2018threat,dong2018boosting,DBLP:journals/tec/SuVS19}.%
Consequently, various verification techniques for NN are being developed these days. 
Most verification techniques focus on proving robustness of the neural networks \cite{DBLP:conf/atva/ChengNR17,DBLP:conf/atva/Ehlers17,DBLP:conf/cav/HuangKWW17,DBLP:conf/cav/KatzBDJK17,gehr2018ai2,DBLP:conf/iclr/SinghGPV19}, i.e. for a classification task, when the input is perturbed by a small $\varepsilon$, the resulting output should be labeled the same as the output of the original input. 
Reliable analysis of robustness is computationally extremely expensive and verification tools struggle to scale when faced with real-world neural networks \cite{dvijotham2018dual}. 

\paragraph{Abstraction} \cite{DBLP:journals/toplas/ClarkeGL94, DBLP:conf/cav/ClarkeGJLV00} is one of the very classic techniques used in formal methods to obtain more understanding of a system as well as more efficient analysis.
Disregarding details irrelevant to the checked property allows for constructing a smaller system with a similar behaviour.
Although abstraction-based techniques are ubiquitous in verification, improving its scalability, such ideas have not been really applied to the verification of NN, except for a handful of works discussed later.

In this paper, we introduce an abstraction framework for NN.
In contrast to syntactic similarities, such as having similar weights on the edges from the previous layer \cite{DBLP:conf/icpr/ZhongYZ18}, our aim is to provide a behavioural, semantic notion of similarity, such as those delivered by predicate abstraction, since such notions are more general and thus more powerful.
Surprisingly, this direction has not been explored for NN.
One of the reasons is that the neurons do not have an explicit structure like states of a program that are determined by valuations of given variables. 
What are actually the values determining neurons in the network?

Note that in many cases, such as recognition of traffic signs or numbers, there are finitely many (say $k$) interesting data points on which and on whose neighbourhood the network should work well.
Intuitively, these are the key points that determine our focus, our scope of interest.
Consequently, we propose the following equivalence on neurons.
We evaluate the $k$ inputs, yielding for each neuron a $k$-tuple of its activation values.
This can be seen as a vector in $\mathbb R^k$. 
We stipulate that two neurons are similar if they have similar vectors, i.e, very close to each other.
To determine reasonable equivalence classes over the vectors, we use the machine-learning technique of k-means clustering \cite{hastie2009elements}.
While other techniques, e.g. principal component analysis \cite{DBLP:books/lib/Bishop07}, might also be useful, simple clustering is computationally cheap and returns reasonable results.
To summarize in other words, in the lack of structural information about the neurons, we use empirical behavioural information instead.

\paragraph{Applications}
Once we have a way of determining similar neurons, we can merge each equivalence class into a single neuron and obtain a smaller, abstracted NN.
There are several uses of such an NN.
Firstly, since it is a smaller one, it may be preferred in practice since, generally, smaller networks are often more robust, smoother, and obviously less resource-demanding to run \cite{compression-survey}.
Note that there is a large body of work on obtaining smaller NN from larger ones, e.g. see \cite{compression-survey, DBLP:journals/pieee/DengLHSX20}.
Secondly, and more interestingly in the safety-critical context, we can use the smaller abstract NN to obtain a guaranteed solution to the original problem (verifying robustness or even other properties) in two distinct ways:

\begin{enumerate}
	\item The smaller NN could replace the original one and could be easier to verify, while doing the same job (more precisely, the results can be $\varepsilon$-different where we can compute an upper bound on $\varepsilon$ from the abstraction).
	\item We can analyze the abstract NN more easily as it is smaller and then transfer the results (proof of correctness or a counterexample) to the original one, provided the difference $\varepsilon$ is small enough.
\end{enumerate}
The latter corresponds to the classic abstraction-based verification scenario.
For each of these points, we provide proof-of-concept experimental evidence of the method's potential.
\pagebreak
\paragraph{Our contribution} is thus the following:
\begin{itemize}

	\item We propose to explore the framework of abstraction by clustering based on experimental data. For feed-forward NN with ReLU, we provide error bounds.

	\item We show that the abstraction is also usable for compression.
	The reduction rate grows with the size of the original network, while the abstracted NN is able to achieve almost the same accuracy as the original network.

	\item We demonstrate the verification potential of the approach: (i) In some cases where the large NN was not analyzable (within time-out),
	we verified the abstraction using existing tools; for other NN, we could reduce verification times from thousands to hundreds of seconds.
	(ii) We show how to transfer a proof of robustness by a verification tool DeepPoly \cite{deeppoly} on the abstract NN to a proof on the original network, whenever the clusters do not have too large radii.	
\end{itemize}

\paragraph{Related work}

In contrast to compression techniques, our abstraction provides a mapping between original neurons and abstract neurons, which allows for transferring the claims of the abstract NN to the original one, and thus its verification.

The very recent work \cite{katzlatest} suggests an abstraction, which is based solely on the sign of the effect of increasing a value in a neuron. 
While we can demonstrate our technique on e.g. 784 dimension input (MNIST) and  work with general networks, \cite{katzlatest}  is demonstrated only on the Acas Xu \cite{acasxu} networks which have 5 dimensional input; 
our approach handles thousands of nodes while the benchmark used in \cite{katzlatest} is of size 300. 
Besides, we support both classification and regression networks.
Finally, our approach is not affected by the number of outputs, whereas the \cite{katzlatest} grows exponentially with respect to number of outputs.

\cite{DBLP:conf/nips/PrabhakarA19} produces so called Interval Neural Networks containing intervals instead of single weights and performs abstraction by merging these intervals. 
However, they do not provide a heuristic for picking the intervals to merge, but pick randomly. 
Further, the results are demonstrated only on the low-dimensional Acas Xu networks.

Further, \cite{DBLP:conf/bmvc/SrinivasB15} computes a similarity measure between incoming weights and then starts merging the most similar ones. 
It also features an analysis of how many neurons to remove in order to not lose too much accuracy. 
However, it does not use clustering on the semantic values of the activations, but only on the syntactic values of the incoming weights, which is a very local and thus less powerful criterion. 
Similarly, \cite{DBLP:conf/icpr/ZhongYZ18} clusters based on the incoming weights only and does not bound the error.
\cite{DBLP:journals/corr/HanMD15} clusters weights in contrast to our activation values) using the k-means clustering algorithm. However, the focus is on weight-sharing and reducing memory consumption, treating neither the abstraction mapping nor verification. 

Finally, abstracting neural networks for verification purposes  was first proposed by \cite{cav10abstractionrefinement}, transforming the networks into Boolean constraints.

\section{Preliminaries}

We consider simple feedforward neural networks, denoted by $\NN$, consisting of one input layer, one output layer and one or more hidden layers. 
The layers are numbered $1, 2, \dots, L$ with $1$ being the \emph{input layer}, $L$ being the \emph{output layer} and $2, \dots, L - 1$ being the \emph{hidden layers}.
Layer $\ell$ contains $n_\ell$ \emph{neurons}.
A neuron is a computation unit which takes an input $\h \in \R$, applies an \emph{activation function} $\phi: \R \to \R$ on it and gives as output $\z = \phi(\h)$.
Common activation functions include tanh, sigmoid or ReLU \cite{maas2013rectifier}, however we choose to focus on ReLU for the sake of simplicity, where ReLU($x$) is defined as $\max(0, x)$.
Neurons of one layer are connected to neurons of the previous and/or next layers by means of weighted connections.
Associated with every layer $\ell$ that is not an output layer is a \emph{weight matrix} $\matr{W}{\ell} = (\weight{\ell}{i,j}) \in \R^{n_{\ell+1} \times n_\ell}$
where $\weight{\ell}{i,j}$ gives the weights of the connections to the $i^{th}$ neuron in layer $\ell+1$ from the $j^{th}$ neuron in layer $\ell$. 
We use the notation $\matr{W}{\ell}_{i,*} = [\weight{\ell}{i,1}, \dots, \weight{\ell}{i,n_\ell}]$ to denote the incoming weights of neuron $i$ in layer $\ell+1$ and 
$\matr{W}{\ell}_{*,j} = [\weight{\ell}{1,j}, \dots, \weight{\ell}{n_{\ell+1},j}]^\intercal$ to denote the outgoing weights of neuron $j$ in layer $\ell$. 
Note that $\matr{W}{\ell}_{i,*}$ and $\matr{W}{\ell}_{*,j}$ correspond to the $i^{th}$ row and $j^{th}$ column of $\matr{W}{\ell}$ respectively.
The input and output of a neuron $i$ in layer $\ell$ is denoted by $\h^{(\ell)}_i$ and $\z^{(\ell)}_i$ respectively. 
We call $\vech^{\ell} = [\h^{(\ell)}_1, \dots, \h^{(\ell)}_{n_\ell}]^\intercal$ the vector of \emph{pre-activations} of layer $\ell$ 
and $\vecz^{\ell} = [\z^{(\ell)}_1, \dots, \z^{(\ell)}_{n_\ell}]^\intercal$ the vector of \emph{activations} of layer $\ell$, where $\z^{(\ell)}_i = \phi^{(\ell)}(\h^{(\ell)}_i)$. 
A vector $\vecb^{(\ell)} \in \R^{n_\ell}$ called \emph{bias} is also associated with all hidden layers $\ell$.

In a feedforward neural network, information flows strictly in one direction: from layer $\ell_m$ to layer $\ell_n$ where $\ell_m < \ell_n$. 
For an $n_1$-dimensional input $\vec{x} \in \inputspace$ from some input space $\inputspace \subseteq \R^{n_1}$, the output $\vec{y} \in \R^{n_L}$ of the neural network $\NN$, also written as $\vec{y} = \NN(\vec{x})$ is iteratively computed as follows:
\begin{align}
\vech^{(0)} &= \vec{x}\notag\\
\vech^{(\ell+1)} &= \matr{W}{\ell} \vecz^{(\ell)} + \vecb^{(\ell+1)} \label{eq:forward-h}\\
\vecz^{(\ell+1)} &= \phi (\vech^{(\ell+1)}) \label{eq:forward-z}\\
\vec{y} &= \vecz^{(L)}\notag
\end{align}
where $\phi(x)$ is the column vector obtained on applying $\phi$ component-wise to $\vec{x}$.
We sometimes write $\vecz^{(\ell)}(\vec{x})$ to denote the output of layer $\ell$ when $\vec{x}$ is given as input to the network.

We define a \emph{local robustness} query to be a tuple $Q = (\NN, \vec{x}, \perturb)$ for some network $\NN$, input $\vec{x}$ and perturbation $\perturb \in \R^{\lvert \vec{x} \rvert}$ and call $\NN$ to be robust with respect to $Q$ if $\forall \vec{x'} \in [\vec{x}-\perturb, \vec{x}+\perturb]\,:\, \NN(\vec{x'}) = \NN(\vec{x})$. In this paper, we only deal with local robustness.

\section{Abstraction} \label{sec:abstraction}

In classic abstraction, states that are similar with respect to a property of interest are merged for analysis.
In contrast, for NN,
it is not immediately clear %
which neurons to merge and what similarity means.
Indeed, neurons are not actually states/configurations of the system; as such, neurons, as opposed to states with values of variables, do not have inner structure.
Consequently, identifying and dropping irrelevant information (part of the structure) becomes more challenging. 
We propose to merge neurons which compute a similar function \emph{on some set $\inputset$ of inputs}, i.e., for each input $x\in\inputset$ to the network, they compute $\varepsilon$-close values. 
We refer to this as I/O-similarity. Further, we choose to merge neurons only within the same layer to keep the analysis and implementation straightforward.

In Section \ref{sec:merge}, we show a straightforward  way to merge neurons in a way that is sensible if they are I/O-similar.
In Section \ref{sec:clustering-abstraction}, we give a heuristic for partitioning neurons into classes according to their I/O-similarity.
While this abstraction idea is not limited to verification of robustness, it preserves the robustness of the original network particularly well, as seen in the experiments in Section~\ref{sec:experiments}.

\subsection{Merging I/O-similar neurons}\label{sec:merge}

I/O-similar neurons can be merged easily without changing the behaviour of the NN too much.
First, we explain the procedure on an example.

\begin{figure}
    \centering
    \begin{subfigure}{.5\textwidth}
        \centering
        \begin{tikzpicture}[every node/.style={circle}, font=\scriptsize]
        \node[draw] (1) at (0,0.8) {1};
        \node[draw] (2) at (0,-0.8) {2};
        \node[draw] (3) at (1.6,1.3) {3};
        \node[draw] (4) at (1.6,0) {4};
        \node[draw] (5) at (1.6,-1.3) {5};
        \node[draw] (6) at (3.2, 0) {6};
        
        \draw[->] (1) -- (3) node[midway,above] {$w_1$};
        \draw[->] (1) -- (4) node[pos=0.35,above,yshift=-3pt] {$w_2$};
        \draw[->] (1) -- (5) node[pos=0.2,left] {$w_3$};
        \draw[->] (2) -- (3) node[pos=0.75,right] {$w_4$};
        \draw[->] (2) -- (4) node[pos=0.9,below,yshift=2pt] {$w_5$};
        \draw[->] (2) -- (5) node[midway,below,yshift=2pt] {$w_6$};
        \draw[->] (3) -- (6) node[midway,above] {$w_7$};
        \draw[->] (4) -- (6) node[near start,above,yshift=-3pt] {$w_8$};
        \draw[->] (5) -- (6) node[midway,below] {$w_9$};
        \end{tikzpicture}
        \caption{Original network}
        \label{fig:merge-before}
    \end{subfigure}%
    \begin{subfigure}{.5\textwidth}
        \centering
        \begin{tikzpicture}[every node/.style={circle}, font=\scriptsize]
        \node[draw] (1) at (0,0.8) {1};
        \node[draw] (2) at (0,-0.8) {2};
        \node[draw] (3) at (1.6,1.3) {3};
        \node[draw] (4) at (1.6,0) {4};
        \node[draw,dotted] (5) at (1.6,-1.3) {\textcolor{gray}{5}};
        \node[draw] (6) at (3.2, 0) {6};
        
        \draw[->] (1) -- (3) node[midway,above] {$w_1$};
        \draw[->] (1) -- (4) node[pos=0.35,above,yshift=-3pt] {$w_2$};
        \draw[->] (2) -- (3) node[pos=0.75,right] {$w_4$};
        \draw[->] (2) -- (4) node[pos=0.8,below,yshift=2pt] {$w_5$};
        \draw[->] (3) -- (6) node[midway,above] {$w_7$};
        \draw[->] (4) -- (6) node[midway,below,yshift=8pt] {$w_8+w_9$};
        \end{tikzpicture}
        \caption{Network after merging neurons 4 and 5}
        \label{fig:merge-after}
    \end{subfigure}
    \caption{Before and after merge: neuron 4 is chosen as a representative of both 4 and 5. On merging, the incoming weights of neuron 5 are deleted and its outgoing weight is added to the outgoing weight of neuron 4.}
    \label{fig:merge}
\end{figure}
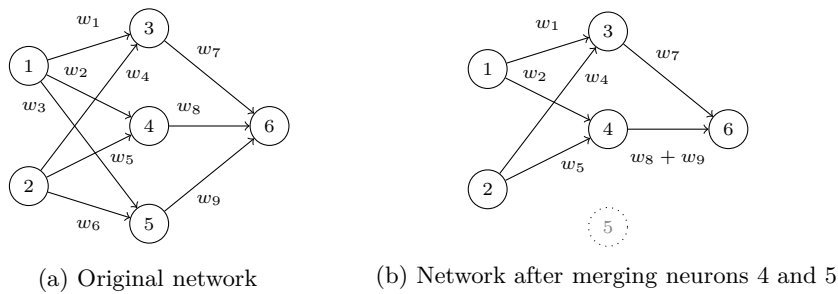

\begin{example}
Consider the network shown in Figure \ref{fig:merge-before}. 
The network contains 2 input neurons and 4 ReLU neurons. For simplicity, we skip the bias term in this example network.
Hence, the activations of the neurons in the middle layer are given as follows: 
$\z_3 = \relu{w_1 \z_1 + w_4 \z_2}$, $\z_4 = \relu{w_2 \z_1 + w_5 \z_2}$, $\z_5 = \relu{w_3 \z_1 + w_6 \z_2}$;
and the output of neuron 6 is $\z_6 = \relu{w_7 \z_3 + w_8 \z_4 + w_9 \z_5}$. 
Suppose that for all inputs in the dataset, the activations of neurons 4 and 5 are `very' close, denoted by $\z_4 \approx \z_5$. Then, $\z_6 = \relu{w_7 \z_3 + w_8 \z_4 + w_9 \z_5} $.

Since neurons 4 and 5 behave similarly, we abstract the network by merging the two neurons as shown in Figure \ref{fig:merge-after}. 
Here, neuron 4 is chosen as a representative of the ``cluster'' containing neurons 4 and 5, and the outgoing weight of the representative is set to the sum of outgoing weights of all the neurons in the cluster. 
Note that the incoming weights of the representative do not change. 
In the abstracted network, the activations of the neurons in the middle layer are now given by $\tilde\z_3 = \relu{w_1 \tilde\z_1 + w_4 \tilde\z_2} = \z_3$ and $\tilde\z_4 = \relu{w_2 \tilde\z_1 + w_5 \tilde\z_2} = \z_4$ with neuron 5 being removed. The output of neuron 6 is therefore $\tilde\z_6 = \relu{w_7 \tilde\z_3 + (w_8 + w_9) \tilde\z_4} = \relu{w_7 \z_3 + (w_8 + w_9) \z_4} = \relu{w_7 \z_3 + w_8\z_4 + w_9\z_4} \approx \z_6$, which illustrates that merging preserves the behaviour of the network.
\end{example}

Formally, the process of merging two neurons $p$ and $q$ belonging to the same layer $\ell$ works as follows. We assume, without loss of generality, that $p$ is retained as the representative.
First, the abstract network $\tilde\NN$ is set to the original network $\NN$. 
Next, $\matr{\tilde W}{\ell-1}$ is set to $\matr{W}{\ell-1}$ with the $q^{th}$ row deleted. 
Further, we set the outgoing weights of the representative $p$ to the sum of outgoing weights of $p$ and $q$, $\matr{\tilde W}{\ell}_{*,p} = \matr{W}{\ell}_{*,p} + \matr{W}{\ell}_{*,q}$.
This procedure is naturally extendable to merging multiple I/O-similar neurons.
It can be applied repeatedly until all desired neurons are merged.
For the interested reader, the correctness proof and further technical details are made available in Appendix \ref{app:hard-merge-correctness}.

\begin{proposition}[Sanity Check]
	If for neurons $p$ and $q$, for all considered inputs $x \in \inputset$ to the network $\NN$, $\z_p = \z_q$, then the network $\tilde\NN$ produced as described above, in which $p$ and $q$ are merged by removing $q$ and letting $p$ serve as their representative, and by setting $\matr{\tilde W}{\ell}_{*,p} = \matr{W}{\ell}_{*,p} + \matr{W}{\ell}_{*,q}$, will have the same output as $\NN$ on all inputs $x \in \inputset$. In other words, $\forall x \in \inputset~\NN(x) = \tilde\NN(x)$.
\end{proposition}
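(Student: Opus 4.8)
The plan is to propagate the equality of outputs forward through the network, layer by layer, exploiting the fact that the merge operation only modifies the two weight matrices $\matr{W}{\ell-1}$ and $\matr{W}{\ell}$ adjacent to the merged layer $\ell$ (namely, $\matr{\tilde W}{\ell-1}$ is $\matr{W}{\ell-1}$ with row $q$ deleted, and $\matr{\tilde W}{\ell}$ is $\matr{W}{\ell}$ with column $q$ deleted and column $p$ replaced by $\matr{W}{\ell}_{*,p}+\matr{W}{\ell}_{*,q}$), while all other layers, biases, and the activation function are untouched.

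First I would fix an arbitrary $x \in \inputset$ and observe, by a trivial induction on the layer index $k < \ell$, that $\tilde\NN$ computes exactly the same pre-activations and activations as $\NN$: layers $1,\dots,\ell-1$ of $\tilde\NN$ are literally identical to those of $\NN$, so $\tilde\vech^{(k)} = \vech^{(k)}$ and $\tilde\vecz^{(k)} = \vecz^{(k)}$. For layer $\ell$ itself, the only change to $\matr{W}{\ell-1}$ is the deletion of row $q$, hence every surviving neuron $i \ne q$ of layer $\ell$ keeps its incoming weight vector $\matr{W}{\ell-1}_{i,*}$ and receives the unchanged input $\tilde\vecz^{(\ell-1)} = \vecz^{(\ell-1)}$; therefore $\tilde\h^{(\ell)}_i = \h^{(\ell)}_i$ and $\tilde\z^{(\ell)}_i = \z^{(\ell)}_i$ for all $i \ne q$ (and neuron $q$ is simply absent in $\tilde\NN$).

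The crucial step is the passage from layer $\ell$ to layer $\ell+1$, and this is exactly where the hypothesis $\z_p = \z_q$ on $\inputset$ is consumed. Writing the pre-activation of layer $\ell+1$ as a sum over columns, $\vech^{(\ell+1)} = \sum_j \matr{W}{\ell}_{*,j}\,\z^{(\ell)}_j + \vecb^{(\ell+1)}$, and doing the same for $\tilde\NN$ using $\matr{\tilde W}{\ell}_{*,p} = \matr{W}{\ell}_{*,p} + \matr{W}{\ell}_{*,q}$, $\matr{\tilde W}{\ell}_{*,j} = \matr{W}{\ell}_{*,j}$ for $j \notin \{p,q\}$, and $\tilde\z^{(\ell)}_j = \z^{(\ell)}_j$ for $j \ne q$, the two sums are seen to differ only in that $\tilde\NN$ contributes the term $\matr{W}{\ell}_{*,q}\,\z^{(\ell)}_p$ where $\NN$ contributes $\matr{W}{\ell}_{*,q}\,\z^{(\ell)}_q$. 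Since $\z_p(x) = \z_q(x)$ by assumption, these terms coincide, so $\tilde\vech^{(\ell+1)} = \vech^{(\ell+1)}$ and hence $\tilde\vecz^{(\ell+1)} = \vecz^{(\ell+1)}$.

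Finally, layers $\ell+2,\dots,L$ of $\tilde\NN$ are identical to those of $\NN$ and receive the identical input $\vecz^{(\ell+1)}$, so one more trivial induction yields $\tilde\vecz^{(L)}(x) = \vecz^{(L)}(x)$, i.e. $\tilde\NN(x) = \NN(x)$; as $x \in \inputset$ was arbitrary, the claim follows. I do not expect a genuine obstacle — this is a sanity check — the only point requiring a little care is the bookkeeping around the removed neuron $q$ (so that the index sets of the sums are correct and the equality $\tilde\vecz^{(\ell)} = \vecz^{(\ell)}$ is read modulo the missing coordinate), together with noting that the single scalar identity $\z_p = \z_q$ is precisely what makes the mismatched term cancel. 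The general case of merging a whole cluster follows by iterating this argument pairwise.
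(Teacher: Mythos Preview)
Your proposal is correct and follows essentially the same approach as the paper: both arguments reduce the claim to showing $\tilde\vech^{(\ell+1)} = \vech^{(\ell+1)}$ by expanding the sum over columns and observing that the only discrepancy is the term $\matr{W}{\ell}_{*,q}\z^{(\ell)}_q$ versus $\matr{W}{\ell}_{*,q}\z^{(\ell)}_p$, which vanish under the hypothesis $\z_p = \z_q$. Your write-up is in fact a bit more explicit about the bookkeeping on the earlier and later layers, but the substance is identical.
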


\subsection{Clustering-based Abstraction} \label{sec:clustering-abstraction}

\begin{algorithm}[t]
    \caption{Abstract network \NN\ with given clustering $\numclusters$} \label{alg:run-abstraction}
    \begin{algorithmic}[1]
        \Procedure{Abstract}{$\NN, \inputset, \numclusters$}
        \State $\tilde\NN \gets \NN$
        \For{$\ell \gets 2, \dots, L-1$}
        \State $A \gets \{ \veca_i^{(\ell)} \mid \veca^{(\ell)}_i = [\tilde\z^{(\ell)}_i(x_1), \dots, \tilde\z^{(\ell)}_i(x_N)] \text{ where } x_i \in \inputset \}$
        \State $\mathcal{C} \gets \textsc{kmeans}(A, \numclusters(\ell))$
        \For{$C \in \mathcal{C}$}
        \State $\matr{\tilde{W}}{\ell}_{*, rep(C)} \gets \sum_{i \in C} \matr{W}{\ell}_{*, i}$
        \EndFor
        \State \textbf{delete} $C\setminus\{rep(C)\}$ from $\tilde\NN$
        \EndFor
        \Return $\tilde\NN$
        \EndProcedure
    \end{algorithmic}
\end{algorithm}

In the previous section, we saw that multiple I/O-similar neurons can be merged to obtain an abstract network behaving similar to the original network. 
However, the quality of the abstraction depends on the choice of neurons used in the merging.
Moreover, it might be beneficial to have multiple groups of neurons that are merged separately.
While multiple strategies can be used to identify such groups, in this section, we illustrate this on one of them --- the unsupervised learning approach of \textit{$k$-means clustering} \cite{DBLP:books/lib/Bishop07}, as a proof-of-concept.

Algorithm \ref{alg:run-abstraction} describes how the approach works in general.
It takes as input the original (trained) network $\NN$, an input set $\inputset$ and a function $\numclusters$, which for each layer gives the number of clusters to be identified in that layer.
Each $x \in \inputset$ is input into $\tilde\NN$ and for each neuron $i$ in layer $\ell$, an $\lvert\inputset\rvert$-dimensional vector of observed activations $\veca^{(\ell)}_i = [\z^{(\ell)}_i(x_1), \dots, \z^{(\ell)}_i(x_{\lvert\inputset\rvert})]$ is constructed.
These vectors of activations, one for each neuron, are collected in the set $A$.
We can now use the $k$-means algorithm on the set $A$ to identify $\numclusters(\ell)$ clusters.
Intuitively, $k$-means aims to split the set $A$ into $\numclusters(\ell)$ clusters such that the pairwise squared deviations of points in the same cluster is minimized.
Once a layer is clustered, the neurons of each cluster are merged and the neuron closest to the centroid of the respective cluster, denoted by $rep(C)$ in the pseudocode, is picked as the cluster representative.
As described in Section \ref{sec:merge}, the outgoing connections of all the neurons in a cluster are added to the representative neuron of the cluster and all neurons except the representative are deleted.

While Algorithm \ref{alg:run-abstraction} describes the clustering procedure, it is still a challenge to find the right $\numclusters$.
In Algorithm \ref{alg:identify-clusters}, we present one heuristic to identify a good set of parameters for the clustering.
It is based on the intuition that merging neurons closer to the output layer impacts the network accuracy the least, as the error due to merging is not multiplied and propagated through multiple layers.
The overarching idea is to search for the best $k$-means parameter, $\numclusters(\ell)$, for each layer $\ell$ starting from the first hidden layer to the last hidden layer, while making sure that the merging with the said parameter ($\numclusters$) does not drop the accuracy of the network beyond a threshold $\alpha$.

\begin{algorithm}[t]
    \caption{Algorithm to identify the clusters} \label{alg:identify-clusters}
    \begin{algorithmic}[1]
        \Procedure{Identify-clusters}{$\NN, \inputset, \alpha$}
        \State $\tilde\NN \gets \NN$
        \For{$\ell \gets 2,...,L-1$}  \Comment{Loops through the layers}
        \If{$accuracy(\tilde\NN) > \alpha$}
        \State $\numclusters(\ell) \gets \textsc{BinarySearch}(\tilde\NN, \alpha, \ell)$ \Comment{Finds optimal number of clusters}
        \State $\tilde\NN \gets \textsc{Abstract}(\tilde\NN, \inputset, \numclusters)$
        \EndIf
        \EndFor
        \State \textbf{return} $\numclusters$
        \EndProcedure
    \end{algorithmic}
\end{algorithm}

The algorithm takes a trained network $\NN$ as input along with an input set $\inputset$ and a parameter $\alpha$, the lower bound on the accuracy of the abstract network.
The first hidden layer ($\ell = 2$) is picked first and $k$-means clustering is attempted on it.
The parameter $\numclusters(\ell)$ is discovered using the \textsc{BinarySearch} procedure which searches for the lowest $k$ such that the accuracy of the network abstracted with this parameter is the highest. 
We make a reasonable assumption here that a higher degree of clustering (i.e. a small $k$) leads to a higher drop in accuracy.
Note that this might cause the \textsc{BinarySearch} procedure to work on a monotone space and we might not exactly get the optimal. However, in our experiments, the binary search turned out to be a sufficiently good alternative to brute-force search.
The algorithm ensures that merging the clusters as prescribed by $\numclusters$ does not drop the accuracy of the abstracted network below $\alpha$.\footnote{Naturally, the parameter $\alpha$ has to be less than or equal to the accuracy of $\NN$}
This process is now repeated on $\tilde\NN$ starting with the next hidden layer.
Finally, $\numclusters$ is returned, ready to be used with Algorithm \ref{alg:run-abstraction}.

Now we present two results which bound the error induced in the network due to abstraction. 
The first theorem applies to the case where we have clustered groups of I/O-similar neurons in each layer for the set $\inputset$ of network inputs.

Let for each neuron $i$, $\veca_i = [\z_i(x_1), \dots, \z_i(x_N)]$ where $x_j \in \inputset$, and let $\tilde\NN$ = \textsc{Abstract($\NN, \inputset, \numclusters$)} for some given $\numclusters$. Define $\vec\epsilon^{(\ell)}$, the maximal distance of a neuron from the respective cluster representative, as
\begin{align}
\vec\epsilon^{(\ell)} = [\epsilon^{(\ell)}_1, \dots, \epsilon^{(\ell)}_{n_\ell}]^\intercal
\quad\quad\quad
\text{where}
\quad\quad\quad
\epsilon^{(\ell)}_i = \lVert\veca_i - \veca_{r_{C_i}}\rVert \label{eq:max-cluster-width}
\end{align}
where $\lVert \cdot \rVert$ denotes the Euclidean norm operator, $C_i$ denotes the cluster containing $i$ and $r_{C_i}$ denotes the representative of cluster $C_i$. Further, define the absolute error due to abstraction in layer $\ell$ as $\accumerror^{(\ell)} = \tilde\vecz^{(\ell)} - \vecz^{(\ell)}$.

\begin{theorem}[Clustering-induced error]\label{thm:accum-abs-error}
	If the accumulated absolute error in the activations of layer $\ell$ is given by $\accumerror^{(\ell)}$ and $\vec\epsilon^{(\ell+1)}$ denotes the the maximal distance of each neuron from their cluster representative (as defined in Eqn. \ref{eq:max-cluster-width}) of layer $\ell+1$, then the absolute error $\accumerror^{(\ell+1)}$ for all inputs $\vec{x} \in \inputset$ can be bounded by
	\[
	\lvert \accumerror^{(\ell+1)} \rvert \leq \lvert \matr{W}{\ell} \accumerror^{(\ell)} \rvert + \vec \epsilon^{(\ell+1)}
	\]
	and hence, the absolute error in the network output is given by $\accumerror^{(L)}$.
\end{theorem}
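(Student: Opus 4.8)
The plan is to prove the inequality by induction on the layer index $\ell$, propagating the abstraction error forward one layer at a time, with the core estimate being nothing more than the $1$-Lipschitzness of ReLU together with the triangle inequality. To make the statement literally meaningful — note that $\accumerror^{(\ell)} = \tilde\vecz^{(\ell)} - \vecz^{(\ell)}$ must be an $n_\ell$-vector even though $\tilde\NN$ has fewer neurons — I would first fix the convention that in $\tilde\NN$ every merged neuron $i$ is \emph{identified} with the representative $r_{C_i}$ of its cluster: it keeps its slot but inherits the representative's incoming weights, bias, and hence activation, so $\tilde\z^{(\ell)}_i = \tilde\z^{(\ell)}_{r_{C_i}}$. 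This ``padded'' view is equivalent, for the output actually computed, to the weight-summing merge of Section~\ref{sec:merge} (summing outgoing weights onto $r_C$ is the same as leaving them in place and tying the activations), and under it the abstract forward pass is $\tilde\vecz^{(\ell+1)} = \phi(\matr{W}{\ell}\tilde\vecz^{(\ell)} + \vecb^{(\ell+1)})$ with the row of $\matr{W}{\ell}$ for each merged neuron replaced by its representative's row. The base case is the input layer: it is never abstracted, so $\accumerror^{(1)} = \vec 0$, and the asserted recurrence already produces $\lvert \accumerror^{(2)} \rvert \le \vec\epsilon^{(2)}$ at the first hidden layer.

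For the inductive step, fix an input $\vec x = x_k \in \inputset$, assume the bound on $\accumerror^{(\ell)}$, and take a neuron $i$ in layer $\ell+1$ with representative $r := r_{C_i}$ (with $r = i$ allowed). Inserting the \emph{original-network} activation $\z^{(\ell+1)}_r$ of the representative as an intermediate term,
\[
\lvert \accumerror^{(\ell+1)}_i \rvert = \lvert \tilde\z^{(\ell+1)}_r - \z^{(\ell+1)}_i \rvert \le \lvert \tilde\z^{(\ell+1)}_r - \z^{(\ell+1)}_r \rvert + \lvert \z^{(\ell+1)}_r - \z^{(\ell+1)}_i \rvert .
\]
The first term is $\lvert \phi(u) - \phi(v) \rvert$ where $u,v$ are the $r$-th pre-activations of $\tilde\NN$ and $\NN$; since $\phi$ (ReLU) is $1$-Lipschitz and the biases cancel, it is at most $\lvert (\matr{W}{\ell}(\tilde\vecz^{(\ell)} - \vecz^{(\ell)}))_r \rvert = \lvert (\matr{W}{\ell}\accumerror^{(\ell)})_r \rvert$, and this equals $\lvert (\matr{W}{\ell}\accumerror^{(\ell)})_i \rvert$ because, by the convention above, rows $i$ and $r$ of $\matr{W}{\ell}$ in $\tilde\NN$ coincide. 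The second term is a single coordinate of $\veca_i - \veca_{r}$, hence bounded by $\lVert \veca_i - \veca_{r} \rVert = \epsilon^{(\ell+1)}_i$ (and for $r = i$ it is $0$, consistent with $\epsilon^{(\ell+1)}_i = 0$ for representatives). Adding the two gives $\lvert \accumerror^{(\ell+1)}_i \rvert \le \lvert (\matr{W}{\ell}\accumerror^{(\ell)})_i \rvert + \epsilon^{(\ell+1)}_i$, i.e. the claimed inequality componentwise; iterating the recurrence through $\ell = L-1$ bounds the output error $\accumerror^{(L)}$, and since this holds for every $x_k \in \inputset$ the ``for all inputs'' clause follows.

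I expect the main obstacle to be conceptual bookkeeping rather than any hard estimate: making precise what ``the abstract network's activation of a merged neuron'' is, checking that the weight-summing merge of Section~\ref{sec:merge} is faithfully captured by the activation-tying picture (so that the recurrence may legitimately be written with $\matr{W}{\ell}$ and so that the $\vec\epsilon$ term vanishes on representatives), keeping straight which weight matrix — original versus abstracted — is meant at each step, and handling the quantifier carefully, since the $\epsilon^{(\ell)}_i$ are defined as Euclidean norms over the whole set $\inputset$ but are used one input at a time via the elementary bound ``a coordinate is at most the norm''. Once those are settled, the actual inequality is immediate from ReLU's non-expansiveness and the triangle inequality.
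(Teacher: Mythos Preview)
Your argument is essentially the paper's: a triangle-inequality split into a propagation term (handled by ReLU's $1$-Lipschitzness, which the paper packages separately as Lemma~\ref{lem:del-lplus1} via sub-additivity) and a clustering term bounded by $\vec\epsilon^{(\ell+1)}$. The one structural difference is the choice of intermediate point. The paper inserts $\tilde z^{(\ell+1)}_i$ --- the activation of neuron $i$ \emph{itself}, computed with the \emph{original} row $W^{(\ell)}_{i,*}$ but the already-abstracted previous layer --- rather than your $z^{(\ell+1)}_r$. With that choice the propagation term is directly $\lvert(W^{(\ell)}\accumerror^{(\ell)})_i\rvert$ for the original $W^{(\ell)}$ (no row-identification step needed), and the clustering term $\lvert\tilde z^{(\ell+1)}_{r} - \tilde z^{(\ell+1)}_i\rvert$ is bounded by $\epsilon^{(\ell+1)}_i$ because Algorithm~\ref{alg:run-abstraction} builds the vectors $\veca_i$ from the partially abstracted $\tilde z$. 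Your route instead yields $\lvert(W^{(\ell)}\accumerror^{(\ell)})_r\rvert$, and your step equating it with the $i$-th entry only goes through if $W^{(\ell)}$ is read as the padded matrix (row $i$ overwritten by row $r$), not the original matrix the theorem's notation suggests. Either reading gives a valid recursive bound, so this is precisely the bookkeeping ambiguity you already flag rather than a real gap --- but the paper's intermediate point sidesteps it.
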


The second result considers the local robustness setting where we are interested in the output of the abstracted network when the input $\vec{x} \in \inputset$  is perturbed by $\delta \in \R^{\lvert x \rvert}$. 

\begin{theorem}\label{thm:robustness}
	If the inputs $\vec{x} \in \inputset$ to the abstract network $\tilde\NN$ are perturbed by $\perturb \in \R^{\lvert \vec{x} \rvert}$, then the absolute error in the network output due to both abstraction and perturbation denoted by $\accumerror_{total}$ is bounded for every $\vec{x} \in \inputset$ and is given by
	$$
	\lvert \accumerror_{total} \rvert \leq
	\lvert \matr{\tilde W}{L} \dots \matr{\tilde W}{1} \perturb \rvert + 
	\lvert \accumerror^{(L)} \rvert
	$$
	where $\matr{\tilde W}{\ell}$ is the matrix of weights from layer $\ell$ to $\ell+1$ in $\tilde\NN$, $L$ is the number of layers in $\tilde\NN$ and $\accumerror^{(L)}$ is the accumulated error due to abstraction as given by Theorem \ref{thm:accum-abs-error}.
\end{theorem}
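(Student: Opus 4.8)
The plan is to split the total deviation $\accumerror_{total} = \tilde\NN(\vec{x}+\perturb) - \NN(\vec{x})$ into a perturbation part and an abstraction part by inserting the term $\tilde\NN(\vec{x})$:
\[
\tilde\NN(\vec{x}+\perturb) - \NN(\vec{x}) \;=\; \bigl(\tilde\NN(\vec{x}+\perturb) - \tilde\NN(\vec{x})\bigr) \;+\; \bigl(\tilde\NN(\vec{x}) - \NN(\vec{x})\bigr),
\]
and then bound the two summands separately, recombining them via the componentwise triangle inequality $\lvert a+b\rvert \le \lvert a\rvert + \lvert b\rvert$. The second summand is exactly the accumulated abstraction error $\accumerror^{(L)}$ in the output layer, which Theorem~\ref{thm:accum-abs-error} already bounds for every $\vec{x}\in\inputset$, so that part needs no further work. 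It is worth recording why the split must be taken in this order: the alternative split through $\NN(\vec{x}+\perturb)$ would require bounding the abstraction error at the perturbed point $\vec{x}+\perturb$, which in general is not in $\inputset$, so Theorem~\ref{thm:accum-abs-error} would not be available there.

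For the perturbation summand $\tilde\NN(\vec{x}+\perturb) - \tilde\NN(\vec{x})$ I would induct over the layers of $\tilde\NN$, now comparing the forward pass of $\tilde\NN$ on $\vec{x}$ with its forward pass on $\vec{x}+\perturb$ (this mirrors the layerwise recursion behind Theorem~\ref{thm:accum-abs-error}). Writing $\vecz^{(\ell)}$ and $\vecz'^{(\ell)}$ for the two activation vectors in $\tilde\NN$ and $\Delta^{(\ell)} = \vecz'^{(\ell)} - \vecz^{(\ell)}$, we have $\Delta^{(1)} = \perturb$ (the input layer is not abstracted, so the dimensions still match), and for each subsequent layer
\[
\Delta^{(\ell+1)} \;=\; \phi\bigl(\matr{\tilde W}{\ell}\vecz'^{(\ell)} + \vecb^{(\ell+1)}\bigr) - \phi\bigl(\matr{\tilde W}{\ell}\vecz^{(\ell)} + \vecb^{(\ell+1)}\bigr).
\]
The biases cancel inside the difference, and since $\phi$ is ReLU — applied componentwise and $1$-Lipschitz — we get $\lvert\Delta^{(\ell+1)}\rvert \le \lvert\matr{\tilde W}{\ell}\Delta^{(\ell)}\rvert \le \lvert\matr{\tilde W}{\ell}\rvert\,\lvert\Delta^{(\ell)}\rvert$ componentwise (with $\lvert\cdot\rvert$ of a matrix taken entrywise, as in Theorem~\ref{thm:accum-abs-error}). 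Unrolling from $\ell=1$ to $L-1$ yields $\lvert\Delta^{(L)}\rvert \le \lvert\matr{\tilde W}{L-1}\rvert\cdots\lvert\matr{\tilde W}{1}\rvert\,\lvert\perturb\rvert$, which is the term $\lvert \matr{\tilde W}{L}\dots\matr{\tilde W}{1}\perturb\rvert$ in the statement (read, exactly as in Theorem~\ref{thm:accum-abs-error}, with the absolute value distributed over the factors). Substituting this and the bound of Theorem~\ref{thm:accum-abs-error} into the triangle-inequality split completes the argument.

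The one genuine obstacle is the ReLU nonlinearity: on $\vec{x}+\perturb$ a different set of neurons of $\tilde\NN$ may be active than on $\vec{x}$, so one cannot simply linearize $\tilde\NN$ and multiply its weight matrices. What makes the bound go through is that we never need to track which neurons switch — the componentwise inequality $\lvert\phi(s)-\phi(t)\rvert \le \lvert s-t\rvert$ holds regardless, so all activation-pattern changes are absorbed into it. A second point to keep straight in the write-up is that the two error sources are kept cleanly separated by construction: the perturbation term is propagated entirely inside $\tilde\NN$ (hence through the abstract weights $\matr{\tilde W}{\ell}$, not the original ones), while the abstraction term is evaluated at the unperturbed $\vec{x}\in\inputset$, which is precisely the regime in which Theorem~\ref{thm:accum-abs-error} delivers $\accumerror^{(L)}$; note also that no membership assumption on $\vec{x}+\perturb$ is needed, since only Lipschitz propagation through $\tilde\NN$ is used for that summand.
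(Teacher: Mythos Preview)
Your proposal is correct and follows essentially the same route as the paper: split $\tilde\NN(\vec{x}+\perturb)-\NN(\vec{x})$ through $\tilde\NN(\vec{x})$, bound the abstraction summand by Theorem~\ref{thm:accum-abs-error}, and bound the perturbation summand by a layerwise propagation through $\tilde\NN$. The only cosmetic difference is that the paper phrases the per-layer step via Lemma~\ref{lem:del-lplus1} (sub-additivity of $\phi$) together with $\phi(x)\le x$, whereas you use the $1$-Lipschitz property of ReLU directly; for ReLU these are equivalent, and your justification of why the split must go through $\tilde\NN(\vec{x})$ rather than $\NN(\vec{x}+\perturb)$ is a point the paper leaves implicit.
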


In other words, these theorems allow us to compute the absolute error produced due to  the abstraction alone; or due to both (i) abstraction and (ii) perturbation of input. Theorem \ref{thm:robustness} gives us a direct (but na\"{i}ve) procedure to perform local robustness verification by checking if there exists an output neuron $i$ with a lower bound ($\tilde\NN_i(x) - (E_{total})_i$) greater than the upper bound ($\tilde\NN_j(x) + (E_{total})_j$) of all other output neurons $j$. The proofs of both theorems can be found in Appendix \ref{app:error-bounds}.

\section{Lifting guarantees from abstract NN to original NN}\label{sec:verification}

In the previous section, we discussed how a large neural network could be abstracted and how the absolute error on the output could be calculated and even used for robustness verification.
However, the error bounds presented in Theorem \ref{thm:robustness} might be too coarse to give any meaningful guarantees.
In this section, we present a proof-of-concept approach for lifting verification results from the abstracted network to the original network.
While in general the lifting depends on the verification algorithm, as a demonstrative example, we show how to perform the lifting when using the verification algorithm DeepPoly \cite{deeppoly} and also how it can be used in conjunction with our abstraction technique to give robustness guarantees on the original network.

We now give a quick summary of DeepPoly. Assume that we need to verify that the network $\NN$ labels all inputs in the $\delta$-neighborhood of a given input $x \in \inputset$ to the same class; in other words, check if $\NN$ is locally robust for the robustness query $(\NN, \vec{x}, \delta)$.
DeepPoly functions by propagating the interval $[\vec{x}-\vec{\delta}, \vec{x}+\vec{\delta}]$ through the network with the help of abstract interpretation, producing over-approximations (a lower and an upper bound) of activations of each neuron.
The robustness query is then answered by checking if the lower bound of the neuron representing one of the labels is greater than the upper bounds of all other neurons. 
We refer the interested reader to \cite[Section 2]{deeppoly} for an overview of DeepPoly.
Note that the algorithm is sound but not complete.

If DeepPoly returns the bounds $\lpre$ and $\upre$ for the abstract network $\tilde\NN$, the following theorem allows us to compute $[\hat\lorig, \hat\uorig]$ such that $[\hat\lorig, \hat\uorig] \supseteq [\lorig, \uorig]$, where $[\lorig, \uorig]$ would have been the bounds returned by DeepPoly on the original network $\NN$.

\begin{theorem}[Lifting guarantees] \label{thm:monster-thm}
    Consider the abstraction $\tilde\NN$ obtained by applying Algorithm \ref{alg:run-abstraction} on a ReLU feedforward network $\NN$. Let $\lpre^{(\ell)}$ and $\upre^{(\ell)}$ denote the lower bound and upper bound vectors returned by DeepPoly for the layer $\ell$, and let $\matr{\tilde{W}}{\ell}_+ = \max(0, \matr{\tilde{W}}{\ell})$ and $\matr{\tilde{W}}{\ell}_- = \min(\matr{\tilde{W}}{\ell}, 0)$ denote the +ve and -ve entries respectively of its $\ell^{th}$ layer weight matrix. Let $\vec\epsilon^{(\ell)}$ denote the vector of maximal distances of neurons from their cluster representatives (as defined in Equation \ref{eq:max-cluster-width}), and let $\vec{x}$ be the input we are trying to verify for a perturbation $[-\vec\perturb, \vec\perturb]$. Then for all layers $\ell < L$, we can compute
    \begin{equation}
    	\openup\jot %
    	\begin{aligned}[t]
    		\uover^{(\ell)} &= \max\Bigg(0, \begin{aligned}
    			&\matr{\tilde{W}}{\ell-1}_+(\uover^{(\ell-1)}+\vec\epsilon^{(\ell-1)})\nonumber\\
    			&+\matr{\tilde{W}}{\ell-1}_-(\lover^{(\ell-1)}-\vec\epsilon^{(\ell-1)})\nonumber\\
    			&+\tilde{b}^{(\ell)}\nonumber\\
    		\end{aligned}\Bigg)\\
    	\end{aligned}
    \quad
    	\begin{aligned}[t]
    		\lover^{(\ell)} &= \max\Bigg(0, \begin{aligned}
    			&\matr{\tilde{W}}{\ell-1}_+(\lover^{(\ell-1)}-\vec\epsilon^{(\ell-1)})\nonumber\\
    			&+\matr{\tilde{W}}{\ell-1}_-(\uover^{(\ell-1)}+\vec\epsilon^{(\ell-1)})\nonumber\\
    			&+\tilde{b}^{(\ell)}\\
    		\end{aligned}\Bigg)\nonumber\\
    	\end{aligned}
    \end{equation}
    where $\uover^{(1)} = \upre^{(1)} = \uorig^{(1)} =  \vec{x} + \vec\perturb$ and
    $\lover^{(1)} = \lpre^{(1)} = \lorig^{(1)} = \vec{x} - \vec\perturb$ such that
    \[
    [\lover, \uover] \supseteq [\lorig, \uorig]
    \]
    where $[\lorig, \uorig]$ is the bound computed by DeepPoly on the original network.
    
    For output layer $\ell = L$, the application of the $\max(0, \cdot)$-function is omitted, the rest remains the same.

\end{theorem}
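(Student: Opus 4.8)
The plan is to establish the two bounds $\lover^{(\ell)}\preceq\lorig^{(\ell)}$ and $\uorig^{(\ell)}\preceq\uover^{(\ell)}$ simultaneously, by induction on the layer index $\ell$, reading every inequality through the correspondence that sends an original neuron $i$ of layer $\ell$ to the representative $rep(C)$ of its cluster $C$ in $\tilde\NN$. Fix the input $\vec x\in\inputset$ and its box $[\vec x-\perturb,\vec x+\perturb]$, and let $[\lorig^{(\ell)}_i,\uorig^{(\ell)}_i]$ be the concrete interval DeepPoly would assign to original neuron $i$. The invariant I would carry at layer $\ell$ is: for every cluster $C$ and every $i\in C$, $\lover^{(\ell)}_{rep(C)}-\epsilon^{(\ell)}_C\le\lorig^{(\ell)}_i$ and $\uorig^{(\ell)}_i\le\uover^{(\ell)}_{rep(C)}+\epsilon^{(\ell)}_C$ — i.e. the $\vec\epsilon$-inflated abstract box dominates the original box of every neuron in the cluster. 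The base case $\ell=1$ is immediate, since the input layer is not clustered, both boxes equal $[\vec x-\perturb,\vec x+\perturb]$, and $\vec\epsilon^{(1)}=\vec 0$. A convenient preliminary reduction is to replace ``DeepPoly on $\NN$'' by plain interval (box) propagation on $\NN$: DeepPoly's concrete bounds are never looser than the box domain's, so it suffices to show the recursion over-approximates box propagation through the original network.

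\textbf{Inductive step.} Assume the invariant at layer $\ell$ and take an original neuron $m$ of layer $\ell+1$. Its pre-activation is $\sum_i \weight{\ell}{m,i}\,z^{(\ell)}_i+\tilde b^{(\ell+1)}_m$, so box propagation bounds $\uorig^{(\ell+1)}_m$ from above by $\max\!\big(0,\ \sum_i(\weight{\ell}{m,i})_+\uorig^{(\ell)}_i+\sum_i(\weight{\ell}{m,i})_-\lorig^{(\ell)}_i+\tilde b^{(\ell+1)}_m\big)$, the sum running over layer-$\ell$ neurons. I would regroup this sum cluster by cluster, apply the invariant coordinatewise (raising each $\uorig^{(\ell)}_i$ to $\uover^{(\ell)}_{rep(C)}+\epsilon^{(\ell)}_C$ wherever it is multiplied by a non-negative weight entry, lowering each $\lorig^{(\ell)}_i$ to $\lover^{(\ell)}_{rep(C)}-\epsilon^{(\ell)}_C$ wherever it is multiplied by a non-positive one), identify the resulting per-cluster weight sums with the entries of $\matr{\tilde W}{\ell}$ produced by Algorithm~\ref{alg:run-abstraction} (which sums outgoing weights) and the $\pm\epsilon^{(\ell)}_C$ with the $\pm\vec\epsilon^{(\ell)}$ appearing in the recursion, and then use monotonicity of $\max(0,\cdot)$ to land on $\uover^{(\ell+1)}_{rep(C)}+\epsilon^{(\ell+1)}_C$. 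The lower bound is symmetric. For the output layer $\ell=L$ the activation equals the pre-activation, so the identical computation with the outer $\max(0,\cdot)$ deleted gives the statement, and there $\vec\epsilon^{(L)}=\vec 0$.

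\textbf{Where the difficulty lies.} The hard part is the sign bookkeeping inside the step. The abstract weight $\matr{\tilde W}{\ell}_{rep(C'),rep(C)}$ is the \emph{sum} $\sum_{i\in C}\weight{\ell}{rep(C'),i}$, and the positive/negative split of a sum is not the sum of the positive/negative splits, so when a cluster sends weights of mixed sign into a successor neuron one must check that the slack introduced by the $\pm\vec\epsilon^{(\ell)}$ inflation (together with the slack already present because we only need to dominate DeepPoly, which is no tighter than the box bounds we actually compute) genuinely absorbs the discrepancy. The same issue arises when dominating a non-representative neuron $m$, whose incoming weights differ from those of $rep(C_m)$; here one invokes I/O-similarity, using that a single coordinate of the Euclidean distance $\epsilon^{(\ell+1)}_i$ bounds the per-input activation gap between $m$ and its representative. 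Finally, it is worth stating carefully in the hypotheses, rather than proving, that $\vec\epsilon$ is measured on the finite set $\inputset$ while verification ranges over the whole neighbourhood $[\vec x-\perturb,\vec x+\perturb]$: the bound is meaningful only when $\perturb$ is small enough that cluster members stay $\vec\epsilon$-close on the perturbed inputs as well.
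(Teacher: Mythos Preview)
Your inductive, cluster-by-cluster plan is exactly the paper's argument: it too writes the interval-propagation expression for $\uorig^{(\ell)}$ (attributing it to DeepPoly), regroups the layer-$(\ell-1)$ sum cluster by cluster, and compares termwise to the abstract recursion, then invokes monotonicity of $\max(0,\cdot)$. Your explicit reduction from DeepPoly to box propagation is a cleaner framing of what the paper leaves implicit.

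The two concerns you isolate under ``Where the difficulty lies'' are genuine, and the paper's proof does not actually resolve them. On the sign bookkeeping: for a multi-node cluster $c$ with representative $r$, the paper passes from $\tilde w_{r,n}=\sum_{m\in c}w_{m,n}$ directly to
\[
\tilde w^+_{r,n}(\hat u_r+\epsilon_r)\;=\;\sum_{m\in c}w^+_{m,n}(\hat u_r+\epsilon_r),
\]
i.e.\ it silently replaces $\bigl(\sum_{m\in c}w_{m,n}\bigr)^+$ by $\sum_{m\in c}(w_{m,n})^+$. That equality fails precisely when the outgoing weights from a cluster have mixed signs, which is exactly the case you flagged; the paper offers no argument that the $\epsilon$-slack absorbs the discrepancy. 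On the $\inputset$-versus-neighbourhood issue: the paper asserts ``by definition of $\epsilon_r$, for all $m\in c$: $\hat u_r+\epsilon_r\ge u_m$ and $\hat l_r-\epsilon_r\le l_m$'', but $\epsilon_r$ is the Euclidean distance between activation vectors evaluated on the finite set $\inputset$, whereas $u_m,l_m$ are DeepPoly bounds over the entire perturbation box $[\vec x-\perturb,\vec x+\perturb]$. This is precisely the gap you pointed out, and it receives no further justification in the paper.

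So your sketch is the paper's proof, and the places you marked as requiring care are the places the published argument is incomplete; you are not missing a trick that closes them.
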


In other words, this theorem allows us to compute an over-approximation of the bounds computed by DeepPoly on the original network $\NN$ by using only the abstract network, thereby allowing a local robustness proof to be lifted from the abstraction to the original network. Note that while this procedure is sound, it is not complete since the bounds computed by Theorem \ref{thm:monster-thm} might still be too coarse. An empirical discussion is presented in Section \ref{sec:experiments}, an example of the proof lifting can be seen in Appendix \ref{app:example-lifting}, and the proof is given in Appendix \ref{app:lifting-proof}.

\section{Experiments}\label{sec:experiments}

We now analyze the potential of our abstraction. In particular, in Section \ref{sec:abstraction-results}, we look at how much we can abstract while still guaranteeing a high test accuracy for the abstracted network. Moreover, we present verification results of abstracted network, suggesting a use case where it replaces the original network. In Section \ref{sec:verification-results}, we additionally consider lifting of the verification proof from the abstracted network to the original network.

We ran experiments with multiple neural network architectures on the popular MNIST dataset \cite{lecun1998mnist}. %
We refer to our network architectures by the shorthand $L \times n$, for example ``$6\times100$'', to denote a network with L fully-connected feedforward hidden layers with $n$ neurons each, along with a separate input and output layers whose sizes depend on the dataset --- 784 neurons in the input layer and 10 in the output layer in the case of MNIST. Interested readers may find details about the implementation in Appendix \ref{app:implementation-details}.

\paragraph{Remark on Acas Xu}
We do not run experiments on the standard NN verification case study Acas Xu \cite{acasxu}.
The Acas Xu networks are very compact, containing only 6 layers with 50 neurons each.
The training/test data for these networks are not easily available, which makes it difficult to run our data-dependent abstraction algorithm.
Further, the network architecture cannot be scaled up to observe the benefits of abstraction, which, we conjecture, become evident only for large networks possibly containing redundancies.
Moreover, the specifications that are commonly verified on Acas Xu are not easily encodable in DeepPoly.

\subsection{Abstraction results} \label{sec:abstraction-results}
First, we generated various NN architectures by scaling up the number of neurons per layer as well as the number of layers themselves and trained them on MNIST. 
More information on the training process is available in Appendix \ref{app:training-details}. 
Then, we executed our clustering-based abstraction algorithm (Algorithm \ref{alg:run-abstraction}) on each trained network allowing for a drop in accuracy on a test dataset of at most 1\%.

\paragraph{Size of the abstraction}

Table \ref{tab:inc_number} gives some information about the quality of the abstraction - the extent to which we can abstract while sacrificing accuracy of at most 1\%.
We can see that increasing the width of a layer (number of neurons) while keeping the depth of the network fixed increases the number of neurons that can be merged, i.e. the reduction rate increases.
We conjecture that there is a minimum number of neurons per layer that are needed to simulate the behavior of the original network. 
On the other hand, interestingly, if the depth of the network is increased while keeping the width fixed, the reduction rate seems to hover around 15-20\%.

\begin{table}[t]
	\centering
	\hfill
	\setlength{\tabcolsep}{12pt}
	\caption{Reduction rate of abstracted neural networks with different architectures along with the drop in accuracy (measured on an independent test set). In the top half, the number of layers (depth) is varied and in the bottom half, the number of neurons per layer (width) is increased. This table shows that the clustering-based abstraction works better with wider networks.}
	\label{tab:inc_number}
	\vspace{1em}
	\begin{tabular}{lrr}
		\toprule
		\textbf{Network} & \textbf{Accuracy} & \textbf{Reduction} \\
		\textbf{Arch.}   &     \textbf{Drop (\%)} &      \textbf{Rate (\%)} \\ \midrule
		$3\times100$     &               0.40 &               15.5 \\
		$4\times100$     &              0.41 &               15.5 \\
		$5\times100$     &              0.21 &               21.2 \\
		$6\times100$     &               0.10 &               13.3 \\ \midrule
		$6\times50$      &               0.10 &                5.7 \\
		$6\times100$     &               0.10 &               13.3 \\
		$6\times200$     &               0.10 &               30.2 \\
		$6\times300$     &               0.20 &               39.9 \\
		$6\times1000$    &              0.01 &               61.7 \\ \bottomrule
	\end{tabular}
\end{table}

\begin{figure}
	\centering
	\begin{tikzpicture}
	\begin{axis}[
	width=0.95\textwidth,
	height=0.6\textwidth,
	symbolic x coords={Layer 1, Layer 2, Layer 3, Layer 4, Layer 5, Layer 6, DUMMY},
	xtick=data,
	ylabel={Neurons after abstraction},
	enlargelimits=0.05,
	legend style={at={(0.66,0.95)},
		anchor=north,legend columns=-1},
	ybar interval=0.6,
	]

	\addplot[color=blue,fill=blue!40!white]
	coordinates{(Layer 1,240)(Layer 2,123)(Layer 3,50)(Layer 4,33)(Layer 5,19)(Layer 6,29)(DUMMY,1)};
	\addplot [color=red,fill=red!40!white]
	coordinates {(Layer 1,136)(Layer 2,71)(Layer 3,32)(Layer 4,35)(Layer 5,31)(Layer 6,26)(DUMMY,1)};
	\addplot[color=green,fill=green!40!white]
	coordinates {(Layer 1,80)(Layer 2,77)(Layer 3,43)(Layer 4,24)(Layer 5,24)(Layer 6,19)(DUMMY,1)};
	\addplot[color=black,fill=black!40!white]
	coordinates{(Layer 1,48)(Layer 2,41)(Layer 3,45)(Layer 4,36)(Layer 5,25)(Layer 6,28)(DUMMY,1)};
	\legend{$6\times500$,$6\times200$,$6\times100$,$6\times50$}
	
	\coordinate (A) at (axis cs:Layer 1,100);
	\coordinate (B) at (axis cs:Layer 1,200);
	\coordinate (C) at (axis cs:Layer 1,50);
	\coordinate (D) at (axis cs:Layer 1,500);
	\coordinate (O1) at (rel axis cs:0,0);
	\coordinate (O2) at (rel axis cs:1,0);
	
	\draw [thick,green,sharp plot,dashed] (A -| O1) -- (A -| O2);
	\draw [thick,red,sharp plot,dashed] (B -| O1) -- (B -| O2);
	\draw [thick,black,sharp plot,dashed] (C -| O1) -- (C -| O2);

	\end{axis}
	\end{tikzpicture}
	\caption{Plot depicting the sizes of the abstract networks when initialized with 4 different architectures and after repetitively applying clustering-based abstraction on the layers until their accuracy on the test set is approximately 95\%. }\label{fig:repetitive-clustering}
\end{figure}
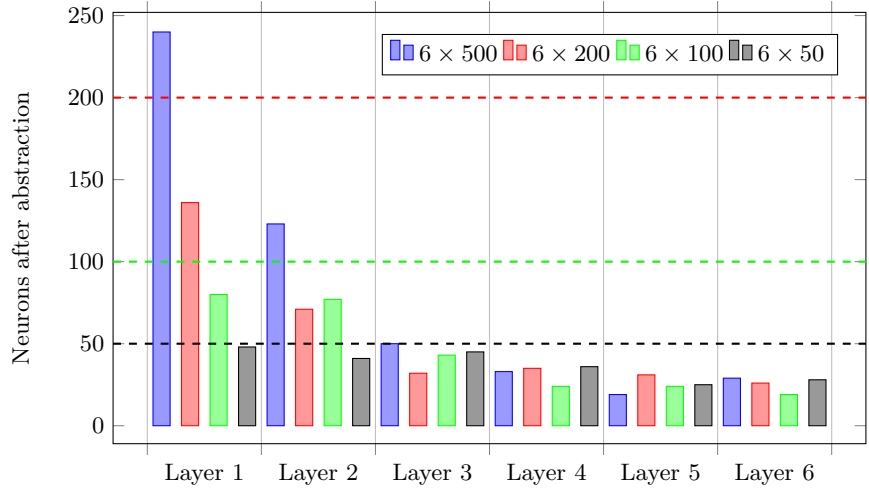

Figure \ref{fig:repetitive-clustering} demonstrates the potential of the clustering-based abstraction procedure in compressing the network.
Here, the abstraction is performed layer after layer from layer 1 to layer 6. We cluster as much as possible permitting the test accuracy of the network to drop by at most 1\%.
Unsurprisingly, we get more reduction in the later (closer to output) layers compared to the initial. 
We conjecture that this happens as the most necessary information is already processed and computed early on, and the later layers transmit low dimensional information.
Interestingly, one may observe that in layers 4, 5 and 6, all network architectures ranging from 50 to 500 neurons/layer can be compressed to an almost equal size around 30 nodes/layer.

\paragraph{Verifying the abstraction}

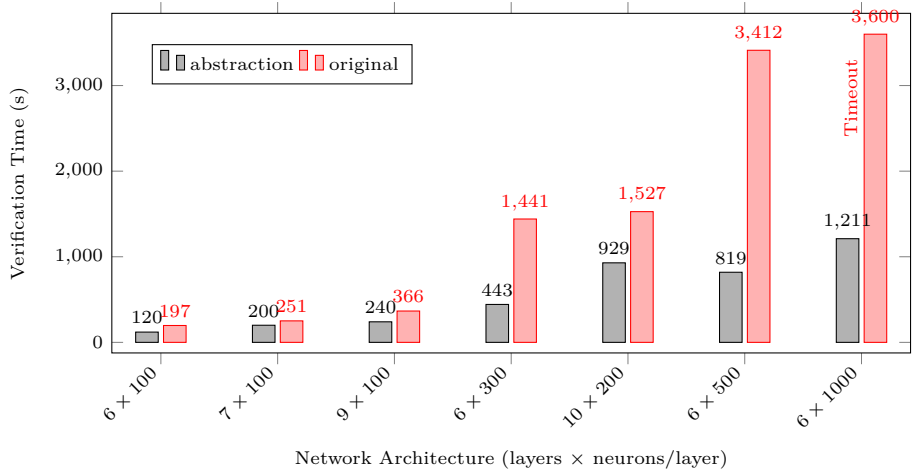
\begin{figure}
	\centering
	\begin{tikzpicture}
	\begin{axis}[
	font=\scriptsize,
	ybar,
	x label style={at={(axis description cs:0.5,-0.25)},anchor=south},
	xlabel={Network Architecture (layers $\times$ neurons/layer)},
	width=\textwidth,
	height=0.5\textwidth,
	enlargelimits=0.07,
	legend style={at={(0.05,0.85)}, anchor=west, legend columns=-1},
	ylabel={Verification Time (s)},
	symbolic x coords={$6\times100$,$7\times100$,$9\times100$,$6\times300$,$10\times200$,$6\times500$,$6\times1000$},
	xtick=data,
	nodes near coords,
	nodes near coords align={vertical},
	x tick label style={rotate=45,anchor=east},
	bar width=0.3cm,
	]
	\addplot[color=black,fill=black!30!white]
	coordinates{($6\times100$,120)($7\times100$,200)($9\times100$,240)($6\times300$,443)($10\times200$,929)($6\times500$,819)($6\times1000$,1211)};
	\addplot[color=red,fill=red!30!white]
	coordinates %
	{($6\times100$,197)($7\times100$,251) ($9\times100$,366) ($6\times300$,1441) ($10\times200$,1527) ($6\times500$,3412)($6\times1000$,3600)};
	\node [rotate=90, anchor=center, color=red] at (axis description cs: 0.925,0.75) {Timeout};
	\legend{abstraction,original}
	\end{axis}
	\end{tikzpicture}
	\caption{Accelerated verification after abstracting compared to verification of the original. The abstracted NN are verified directly without performing proof lifting, as if they were to replace the original one in their application. The time taken for abstracting (not included in the verification time) is 14, 14, 20, 32, 37, 53, and 214s respectively.}    \label{fig:verification-improvement}
\end{figure}

As mentioned in the Section \ref{sec:intro}, we found that the abstraction, considered as a standalone network, is faster to verify than the original network.
This opens up the possibility of verifying the abstraction and replacing the original network with it, in real-use scenarios.
In Figure \ref{fig:verification-improvement}, we show the time it takes to verify the abstract network using DeepPoly against the time taken to verify the respective original network.
Note that the reduction rate and accuracy drop of the corresponding networks can be found in Table \ref{tab:inc_number} above.
Clearly, there is a significant improvement in the run time of the verification algorithm; for the $6 \times 1000$ case, the verification algorithm timed out after 1 hour on the original network while it finished in less than 21 minutes on the abstract network.

\subsection{Results on lifting verification proof} \label{sec:verification-results}

\begin{table}[t]
	\centering
	\hfill
	\setlength{\tabcolsep}{12pt}
	\caption{Results of abstraction, verification and proof lifting of a $6\times300$ NN on 200 images to verify. The first column gives the number of neurons removed in layers 3, 4, 5 and 6 respectively. The second column shows the reduction in the size of the abstracted network compared to the original. We also report the number of images for which the original network could be proved robust by lifting the verification proof.}
	\label{tab:prooflifting}
	\begin{tabular}{lrrr}
		\toprule
		\textbf{Removed} & \textbf{Reduction} & \textbf{Images}   & \textbf{Verification}   \\
		  \textbf{Neurons} & \textbf{Rate (\%)} & \textbf{Verified} &   \textbf{Time ($min$)}   \\ \midrule
		   15, 25, 100, 100  &              13.33 &               195 &                    36 \\
		   15, 50, 100, 100  &              14.72 &               195 &                    36 \\
		   25, 25, 100, 100  &              13.89 &               190 &                    36 \\
		   25, 50, 100, 100  &              15.28 &               190 &                    36 \\
		   25, 100, 100, 100 &              18.06 &                63 &                    35 \\
		   50, 100, 100, 100 &              19.44 &                 0 &                    34\\\bottomrule
	\end{tabular}
\end{table}

Finally, we ran experiments to demonstrate the working of the full verification pipeline --- involving clustering to identify the neurons that can be merged, performing the abstraction (Section \ref{sec:clustering-abstraction}), running DeepPoly on the abstraction and finally lifting the verification proof to answer the verification query on original network (Section \ref{sec:verification}).

We were interested in two parameters: (i) the time taken to run the full pipeline; and (ii) the number of verification queries that could be satisfied (out of 200).
We ran experiments on a $6\times300$ network that could be verified to be locally robust for 197/200 images in 48 minutes by DeepPoly. The results are shown in Table \ref{tab:prooflifting}.
In the best case, our preliminary implementation of the full pipeline was able to verify robustness for 195 images in 36 minutes --- 13s for clustering and abstracting, 35 min for verification, and 5s for proof lifting.
In other words, a 14.7\% reduction in network size produced a 25\% reduction in verification time.
When we pushed the abstraction further, e.g. last row of Table \ref{tab:prooflifting}, to obtain a reduction of 19.4\% in the network size, DeepPoly could still verify robustness of the abstracted network for 196 images in just 34 minutes (29\% reduction). 
However, in this case, the proof could not be lifted to the original network as the over-approximations we obtained were too coarse.

This points to the interesting fact that the time taken in clustering and proof lifting are indeed not the bottlenecks in the pipeline. Moreover, a decrease in the width of the network indeed tends to reduce the verification time.
This opens the possibility of spending additional computational time exploring more powerful heuristics (e.g. principal component analysis) in place of the na\"ive $k$-means clustering in order to find smaller abstractions. Moreover, a counterexample-guided abstraction refinement (CEGAR) approach can be employed to improve the proof lifting by tuning the abstraction where necessary.

\section{Conclusion}

We have presented an abstraction framework for feed-forward neural networks using ReLU activation units.
Rather than just syntactic information, it reflects the semantics of the neurons, via our concept of I/O-similarity on experimental values.
In contrast to compression-based frameworks, the abstraction mapping between the original neurons and the abstract neurons allows for transferring verification proofs (transferring counterexamples is trivial), allowing for abstraction-based verification of neural networks. 

While we have demonstrated the potential of the new abstraction approach by a proof-of-concept implementation, its practical applicability relies on several next steps.
Firstly, I/O-similarity with the Euclidean distance ignores even any linear dependencies of the I/O-vectors; I/O-similarity with e.g.\ principal component analysis thus might yield orders of magnitude smaller abstractions, scaling to more realistic networks.
Secondly, due to the correspondence between the proofs, CEGAR could be employed: one can refine those neurons where the transferred constraints in the proof become too loose.
Besides, it is also desirable to extend the framework to other architectures, such as convolutional neural networks.

\printbibliography

\appendix

\section{Technical Details and Proofs}

\subsection{Correctness of merging} \label{app:hard-merge-correctness}

Suppose for every input, the activation values of two neurons $p$ and $q$ of layer $\ell$ are equal, i.e. $\z^{(\ell)}_p = \z^{(\ell)}_q$ 
(note that for the sake of readability, we omit the input from our equations and write $\z^{(\ell)}_i$ instead of $\z^{(\ell)}_i(x)$), 
then we argue that the neurons could be merged by keeping, without loss of generality, only neuron $p$ and setting the outgoing weights of $p$ to the sum of outgoing weights of $p$ and $q$. 
More formally, suppose the activation value of neuron $p$, $\z^{(\ell)}_p = \phi(\matr{W}{\ell-1}_{p,*} \vecz^{(\ell-1)} + \vecb^{\ell}_p)$, 
and that of neuron $q$, $\z^{(\ell)}_q = \phi(\matr{W}{\ell-1}_{q,*} \vecz^{(\ell-1)} + \vecb^{\ell}_q)$ are equal for every input to the network.
Let $\tilde\NN$ be the neural network obtained after merging neurons $p$ and $q$ in layer $\ell$. 
Note that $\NN$ and $\tilde\NN$ are identical in all layers which follow layer $\ell$. 
Due to the feedforward nature of the networks, it is easy to see that if for each input the vector of pre-activations of layer $\ell+1$ in $\NN$ and $\tilde\NN$ are same, 
i.e. $\tilde\vech^{(\ell+1)} = \vech^{(\ell+1)}$, then the outputs of $\NN$ and $\tilde\NN$ will also be the same.

The weight matrices of $\NN$ are copied to $\tilde\NN$. $\matr{\tilde W}{\ell-1}$ is set to $\matr{W}{\ell-1}$ with the $q^{th}$ row deleted. 
Further, we set $\matr{\tilde W}{\ell}_{*,p} = \matr{W}{\ell}_{*,p} + \matr{W}{\ell}_{*,q}$. 
Intuitively, this is same as deleting neuron $q$ and moving all its outgoing edges to neuron $p$. 
Suppose the pre-activation value of neuron $i$ of layer $\ell + 1$ of $\NN$ was given by

\begin{align*}
\h^{(\ell+1)}_i &= \vecb^{(\ell+1)}_i + \weight{\ell}{i,p}\z^{(\ell)}_p + \weight{\ell}{i,q}\z^{(\ell)}_q + \sum_{k \in \{1,\dots,n_\ell\}\setminus\{p, q\}}\weight{\ell}{i,k}\z^{(\ell)}_k\\
\noalign{\quad\ \ Since we assume that $\z^{(l)}_p = \z^{(l)}_q$, we can rewrite the RHS of the above equation as}
\h^{(\ell+1)}_i &= \vecb^{(\ell+1)}_i + (\weight{\ell}{i,p}+\weight{\ell}{i,q})\z^{(\ell)}_p + \sum_{k \in \{1,\dots,n_\ell\}\setminus\{p, q\}}\weight{\ell}{i,k}\z^{(\ell)}_k
\end{align*}

In the transformed NN $\tilde\NN$, since we have set $\matr{\tilde W}{\ell}_{*,p} = \matr{W}{\ell}_{*,p} + \matr{W}{\ell}_{*,q}$, we obtain
\begin{align*}
\tilde\h^{(\ell+1)}_i &= \vecb^{(\ell+1)}_i + (\weight{\ell}{i,p}+\weight{\ell}{i,q})\z^{(\ell)}_p + \sum_{k \in \{1,\dots,n_\ell\}\setminus\{p, q\}}\weight{\ell}{i,k}\z^{(\ell)}_k
\end{align*}

\subsection{Error bounds} \label{app:error-bounds}

Let $n_\ell$ denote the number of neurons in layer $\ell$. We use the symbol $\vecz^{(\ell)} = [\z^{(\ell)}_1, \dots, \z^{(\ell)}_{n_\ell}]^\intercal$ to denote the column vector of activations of layer $\ell$, 
$\matr{W}{\ell} = (\weight{\ell}{ji})$ to denote the $n_{\ell+1} \times n_{\ell}$ matrix of weights $\weight{\ell}{ji}$ of the edge from node $i$ in layer $\ell$ to node $j$ in layer $\ell+1$, 
$\vech^{(\ell)} = [\h^{(\ell)}_1, \dots, \h^{(\ell)}_{n_\ell}]^\intercal$ denotes the column vector of pre-activations of layer $\ell$,
and $\vecb^{(\ell)} = [\b^{(\ell)}_1, \dots, \b^{(\ell)}_{n_\ell}]^\intercal$ to denote the column vector of biases of layer $\ell$.

In the rest of the discussion, we omit the parameter $x$ and write $\vecz^{(\ell)}$ or $\vech^{(\ell)}$ instead of $\vecz^{(\ell)}(x)$ or $\vech^{(\ell)}(x)$ respectively for the sake of readability.

\begin{lemma}[Single-step error] \label{lem:del-lplus1}
	If the activations $\vecz^{(\ell)}$ of a single layer $\ell$ are perturbed by $\Delta \vecz^{(\ell)}$, then the perturbation of the activations of layer $\ell+1$ is bounded according to 
	$$\Delta \vecz^{(\ell+1)} \leq \phi(\matr{W}{\ell}\Delta \vecz^{(\ell)})$$
	if the activation function $\phi$ is sub-additive.
\end{lemma}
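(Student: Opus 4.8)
The plan is to expand the defining recurrence \eqref{eq:forward-h}--\eqref{eq:forward-z} for the activations of layer $\ell+1$ in the unperturbed and perturbed cases, and to reduce the claim to a single application of sub-additivity. Write $u := \matr{W}{\ell}\vecz^{(\ell)} + \vecb^{(\ell+1)}$ for the unperturbed pre-activation vector of layer $\ell+1$ and $v := \matr{W}{\ell}\Delta\vecz^{(\ell)}$ for the pre-activation shift induced by the perturbation. By linearity of the affine map $\vecz \mapsto \matr{W}{\ell}\vecz + \vecb^{(\ell+1)}$, the perturbed pre-activation vector is exactly $u+v$. Hence $\vecz^{(\ell+1)} = \phi(u)$, the perturbed activation vector equals $\phi(u+v)$, and by definition $\Delta\vecz^{(\ell+1)} = \phi(u+v) - \phi(u)$.

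Next I would invoke the hypothesis that $\phi$ is sub-additive. Since $\phi$ is applied componentwise, it suffices that the scalar function satisfies $\phi(a+b) \le \phi(a) + \phi(b)$ for all $a,b \in \R$; for $\phi = \mathrm{ReLU}$ this is the elementary inequality $\max(0,a+b) \le \max(0,a) + \max(0,b)$, which is checked by splitting on the signs of $a$ and $b$. Applying this entrywise with arguments $u$ and $v$ gives $\phi(u+v) \le \phi(u) + \phi(v)$, and subtracting $\phi(u) = \vecz^{(\ell+1)}$ from both sides yields $\Delta\vecz^{(\ell+1)} = \phi(u+v) - \phi(u) \le \phi(v) = \phi(\matr{W}{\ell}\Delta\vecz^{(\ell)})$, which is precisely the asserted bound, all inequalities read componentwise.

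There is essentially no hard step; the only points worth stating explicitly are (i) that the inequality in the statement is meant entrywise, and (ii) that no sign assumption on $\Delta\vecz^{(\ell)}$ (nor on $u$ or $v$) is required, since sub-additivity holds for all real arguments — this is what allows the lemma to be chained across layers in the proof of Theorem \ref{thm:accum-abs-error} even when accumulated errors have mixed signs. I would also note that the bound is one-sided, i.e.\ an upper bound on $\Delta\vecz^{(\ell+1)}$; combined with the analogous lower bound (or simply by passing to absolute values) this is what feeds into the subsequent error-propagation results.
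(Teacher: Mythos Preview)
Your proof is correct and follows essentially the same route as the paper: compute the pre-activation shift by linearity of the affine map, then apply sub-additivity of $\phi$ componentwise and subtract $\phi(u)=\vecz^{(\ell+1)}$. The only differences are cosmetic (your shorthand $u,v$ versus the paper's $\vech^{(\ell+1)}$, $\Delta\vech^{(\ell+1)}$) together with your additional remarks on entrywise interpretation and one-sidedness, which are helpful clarifications but not part of the paper's argument.
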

\begin{proof}[of Lemma \ref{lem:del-lplus1}]
	Suppose that $\vecz^{(\ell)}$ was perturbed by some $\Delta \vecz^{(\ell)}$ to obtain the new activation $\tilde\vecz^{(\ell)} = \vecz^{(\ell)} + \Delta \vecz^{(\ell)}$, then we would define
	\begin{align}
	\tilde\vech^{(\ell+1)} &= \matr{W}{\ell} \tilde\vecz^{(\ell)}  + \vecb^{(\ell+1)}\notag
	\end{align}
	
	following which, we can bound the difference between the original $\vech^{(\ell+1)}$ and the perturbed $\tilde\vech^{(\ell+1)}$:
	\begin{align}
	\Delta \vech^{(\ell+1)} = \tilde\vech^{(\ell+1)} - \vech^{(\ell+1)} &= \matr{W}{\ell} ( \tilde\vecz^{(\ell)} - \vecz^{(\ell)} )\notag\\
	&= \matr{W}{\ell} \Delta \vecz^{(\ell)}\notag
	\end{align}
	
	When this error is propagated across the neurons of the $(\ell+1)^{th}$ layer, we have
	\begin{align}
	\tilde\vecz^{(\ell+1)} = \phi(\tilde\vech^{(\ell+1)}) &= \phi(\vech^{(\ell+1)} + \Delta \vech^{(\ell+1)}) \notag\\
	&= \phi(\vech^{(\ell+1)} + \matr{W}{\ell} \Delta \vecz^{(\ell)}) \label{eq:tilde-z-l-plus-1}
	\end{align}
	If $\phi$ is sub-additive, we have
	\begin{align}
	\tilde\vecz^{(\ell+1)} &\leq \phi(\vech^{(\ell+1)}) + \phi(\matr{W}{\ell} \Delta \vecz^{(\ell)})\notag\\
	\tilde\vecz^{(\ell+1)} &\leq \vecz^{(\ell+1)} + \phi(\matr{W}{\ell} \Delta \vecz^{(\ell)})\notag\\
	\Delta \vecz^{(\ell+1)} &\leq \phi(\matr{W}{\ell} \Delta \vecz^{(\ell)}) \notag
	\end{align}
	\qed
\end{proof}

\begin{proof}[of Theorem \ref{thm:accum-abs-error}]
	Assume we already have clustered all layers up to layer $\ell+1$ and we know the accumulated error for layer $\ell$, namely $\accumerror^{(\ell)}$. 
	The error in layer $\ell+1$ is defined as $\accumerror^{(\ell+1)}=\lvert\zpost^{(\ell+1)}-\vecz^{(\ell+1)}\rvert$, where $\zpost$ denotes the activation values of layer $\ell+1$ after clustering it. Let $\zpre^{(\ell+1)}$ denote the activation values of layer $\ell+1$ when all layers before are clustered but not the layer itself, and $\vecz^{(\ell+1)}$ shall be the original activation values. We have
	\begin{align}\label{eq:acc-error}
	\lvert\accumerror^{(\ell+1)}\rvert&=\lvert\zpost^{(\ell+1)}-\vecz^{(\ell+1)}\rvert\\
	&=\lvert\zpost^{(\ell+1)}-\zpre^{(\ell+1)}+\zpre^{(\ell+1)}-\vecz^{(\ell+1)}\rvert\\
	&\leq \lvert\zpost^{(\ell+1)}-\zpre^{(\ell+1)}\rvert+\lvert \zpre^{(\ell+1)}-\vecz^{(\ell+1)}\rvert
	\end{align}

	We know from Lemma \ref{lem:del-lplus1} how the error is propagated to the next layer. So, we know
	\begin{equation}
		\lvert \zpre^{(\ell+1)}-\vecz^{(\ell+1)}\rvert\leq\lvert\phi(\matr{W}{(\ell)}\accumerror^{(\ell)})\rvert
	\end{equation}
	We now have to consider the error introduced in layer $\ell+1$ by the clustering.
	From definition, it is $|\vecz_{r_i}-\vecz_i|\leq\vec\epsilon_{r_i}$ for any node $i$ and its cluster representative $r_i$. Note that any node is contained in a cluster but that most of the clusters have size 1. For most nodes, we would then have $i=r_i$. However, in the general case we get
	\begin{equation}
	\lvert\zpost^{(\ell+1)}-\zpre^{(\ell+1)}\rvert\leq\vec\epsilon^{(\ell+1)}
	\end{equation} 
	Thus, equation \ref{eq:acc-error} becomes
	\begin{equation}
		\lvert\accumerror^{(\ell+1)}\rvert\leq\lvert\phi(\matr{W}{(\ell)}\accumerror^{(\ell)})
		\rvert+\vec\epsilon^{(\ell+1)}
	\end{equation}
	This can be made simpler for the ReLU-, parametric or leaky ReLU and the tanh-activation function. For all of them, it holds $|\phi(x)|\leq|x|$. Thus
	\begin{equation}
	\lvert\accumerror^{(\ell+1)}\rvert\leq\lvert\matr{W}{(\ell)}\accumerror^{(\ell)}
	\rvert+\vec\epsilon^{(\ell+1)}
	\end{equation}
	which is what we wanted to show.
	\qed

\end{proof}

\begin{proof}[of Theorem \ref{thm:robustness}]
	We are interested in computing $ \lvert\accumerror_{total}\rvert=\vert\tilde\NN(\tilde x) - \NN(x)\rvert$ which can be rewritten as $\lvert\tilde\NN(\tilde x) - \tilde\NN(x)) + (\tilde\NN(x) - \NN(x)\rvert$.\\
	$\lvert\tilde\NN(\tilde x) - \tilde\NN(x)\rvert \leq \lvert \matr{\tilde{W}}{L} \dots \matr{\tilde{W}}{1} \perturb  \rvert$ is a consequence of Lemma \ref{lem:del-lplus1} and under the assumption that the activation function $\phi$ fulfills $\phi(x)\leq x$, which is true for ReLU and tanh.\\
	$\lvert\tilde\NN(x) - \NN(x)\rvert = \lvert\accumerror^{(L)}\rvert$ which is a direct consequence of Theorem \ref{thm:accum-abs-error}.
	\qed
\end{proof}

\subsection{Lifting guarantees}\label{app:lifting-proof}
\begin{proof}[of Theorem \ref{thm:monster-thm}]
	As the verification of a specific property only considers the upper and lower bound of the output layer $L$, it is sufficient to show that  $\uover(L)\geq\uorig(L)$ and $\lover(L)\leq\lorig(L)$, where $\uorig$ and $\lorig$ correspond to the upper- and lower-bound given by DeepPoly on the original network, and $\uover(L)$ and $\lover(L)$ denote the over-approximations.\\
	We can show this inductively, where the base case is obvious. For the first layer, $\uover^{(1)} = \upre^{(1)} + \uaccdelta(1) = \uorig^{(1)} + 0$ and $\lover^{(1)} = \lpre^{(1)} - \laccdelta(1) = \lorig^{(0)} - 0$.\\
	Let's consider now some layer $\ell$ and start with the upper bound. We have
	\begin{align}
		\uorig^{(\ell)}=\max\Bigg(0,\matr{W}{\ell-1}_+u^{(\ell-1)}+\matr{W}{\ell-1}_-l^{(\ell-1)}+b^\ell\Bigg)
	\end{align}
	from the calculation of \cite[Section 4.4]{deeppoly} and 
	\begin{align}
		\uover^{(\ell)}=\max\Bigg(0, \matr{\tilde{W}}{\ell-1}_+(\uover^{(\ell-1)}+\vec\epsilon^{(\ell-1)})+\matr{\tilde{W}}{\ell-1}_-(\lover^{(\ell-1)}-\vec\epsilon^{(\ell-1)})+\tilde{b}^\ell\Bigg)
	\end{align}
	by our definition.\\
	We want to show that $\uover^{(\ell)}-\uorig^{(\ell)}\geq 0$. We can leave out the $\max$ operation, because it is clear that $(a-b\geq0)\Rightarrow(\max(0,a)-\max(0,b)\geq0)$.\\
	Let's consider only one node in layer $\ell$, say node $n$, and omit the $\max$-operation. For simplicity, we also omit the superscript $\ell-1$ in the following calculation. Let $I$ denote all nodes from layer $\ell-1$ in the original network and $\tilde{I}$ in the abstracted one. We get
	\begin{align*}
		\uover_n^{(\ell)}-\uorig_n^{(\ell)} &= \sum_{i\in \tilde{I}} \tilde{w}^+_{i,n}(\hat{u}_i+\epsilon_i)\\
		&+ \sum_{i\in \tilde{I}} \tilde{w}^-_{i,n}(\hat{l}_i-\epsilon_i)\\
		&- \Bigg(\sum_{i\in I} w^+_{i,n}u_i
		+ \sum_{i\in I}w^-_{i,n}l_i\Bigg)\\
	\end{align*}
	It is $\tilde{I}\subset I$ and we can map all nodes in $I$ to their corresponding cluster $c$ with its cluster-representative $r\in\tilde{I}$. Thus we get
	\begin{align*}
		\uover_n^{(\ell)}-\uorig_n^{(\ell)}=&
		\sum_{c\,\text{cluster}}\Bigg(\tilde{w}^+_{r,n}(\hat{u}_r+\epsilon_r)-\sum_{m\in c}w^+_{m,n}u_m\Bigg)\\
		+&\sum_{c\,\text{cluster}}\Bigg(\tilde{w}^-_{r,n}(\hat{l}_r-\epsilon_r)-\sum_{m\in c}w^-_{m,n}l_m\Bigg)\\
	\end{align*}
	For each cluster $c$, there are two cases: either it contains only one node or more than one node. In the case of one node per cluster, the abstracted network does not differ from the original one, so $\tilde{w}_r = w_r$. For such cluster $c$, we get
	\begin{align*}
		&w^+_{r,n}(\hat{u}_r+\epsilon_r)-w^+_{r,n}u_r\\
		= &\, w^+_{r,n}(\hat{u}_r+\epsilon_r-u_r)\\
		= &\, w^+_{r,n}(\hat{u}_r-u_r)\\
		\geq & \,0
	\end{align*}
	and
	\begin{align*}
		&w^-_{r,n}(\hat{l}_r-\epsilon_r)-w^-_{r,n}l_r\\
		= &\, w^-_{r,n}(\hat{l}_r-\epsilon_r-l_r)\\
		= &\, w^-_{r,n}(\hat{l}_r-l_r)\\
		\geq & \,0
	\end{align*}
	because for any un-clustered node $\epsilon_r=0$, and by induction hypothesis $\hat{u}_r\geq u_r$ and $\hat{l}_r\leq l_r$.\\
	
	Let's now consider the second case, where one cluster contains more than one node. In such cluster $c$, we have $\tilde{w}_r=\sum_{m\in c}w_m$, so
	\begin{align*}
		&\tilde{w}^+_{r,n}(\hat{u}_r+\epsilon_r)-\sum_{m\in c}w^+_{m,n}u_m\\
		=&\,\sum_{m\in c}w^+_{m,n}(\hat{u}_r+\epsilon_r)-\sum_{m\in c}w^+_{m,n}u_m\\
		=&\,\sum_{m\in c}w^+_{m,n}(\hat{u}_r+\epsilon_r-u_m)\\
		\geq & \,0
	\end{align*}
	and
	\begin{align*}  
		&\tilde{w}^-_{r,n}(\hat{l}_r-\epsilon_r)-\sum_{m\in c}w^-_{m,n}l_m)\\
		=&\,\sum_{m\in c}w^-_{m,n}(\hat{l}_r-\epsilon_r-l_m)\\
		\geq & \,0
	\end{align*}
	because by definition of $\epsilon_r$, we have for all $m\in c$: $\hat{u}_r+\epsilon_r\geq u_m$ and similarly, $\hat{l}_r-\epsilon_r\leq l_m$.
	We get that $\uover_n^{(\ell)}-\uorig_n^{(\ell)}\geq 0$. The calculation for the lower bounds follows the same principle just with exchanged signs and is thus not presented here.
	\qed
\end{proof}

\subsection{Details on training process}\label{app:training-details}

We generated various NN architectures by scaling up the number of neurons per layer as well as the number of layers themselves and trained them on MNIST. For doing so, we split the dataset into three parts: one for the training, one for validation and one for testing. The training is then performed on the training dataset by using common optimizers and loss functions. The training was stopped when the accuracy on the validation set did not increase anymore.

The NN on MNIST were trained on 60000 samples from the whole dataset. Of these, 10\% are split for validation, thus there are 54000 images for the training itself and 6000 images for validation.
The optimizer used for the training process is ADAM, which is an extension to the stochastic gradient descent. To prevent getting stuck in local minima, it includes the first and second moments of the gradient. It is a common choice for the training of NN and performed reasonably well in this application. Its parameter are set to the default from TensorFlow, namely a learning rate of $0.001$, $\beta_1=0.9$, $\beta_2=0.999$ and $\epsilon=1e-07$.\\
For MNIST, the most reasonable loss function is the sparse categorical crossentropy.
The training process was stopped when the loss function on the validation data did not decrease anymore. Usually, the process would stop after at most 10 epochs.

\subsection{Proof lifting example}\label{app:example-lifting}

\begin{figure}
	\centering
	\begin{tikzpicture}[node distance=2.1cm, every node/.style={circle}, removed/.style={gray}, font=\scriptsize, execute at begin node=\setlength{\baselineskip}{12pt}]
	\node[draw, label={[label distance=1pt, align=left]$\lpre_1 = -1$\\$\upre_1 = 1$}] (1) {$\xpre_1$};
	\node[draw, label={[label distance=-16pt, below,align=left]$\lpre_2 = -1$\\$\upre_2 = 1$}, below of =1] (2) {$\xpre_2$};
	\node[draw,  label={0},label={[label distance=1pt, align=left]$\lpre_3 = -2$\\$\upre_3 = 2$}, right of =1] (3) {$\xpre_3$};
	\node[draw,  label={0},label={[label distance=-16pt, below,align=left]$\lpre_4 = -2$\\$\upre_4 = 2$}, below of =3] (4) {$\xpre_4$};
	\node[draw, label={[label distance=3pt, align=left]$\lpre_5 = 0$\\$\upre_5 = 2$}, right of =3] (5) {$\xpre_5$};
	\node[draw, label={[label distance=-18pt, below,align=left]$\lpre_6 = 0$\\$\upre_6 = 2$}, below of =5] (6) {$\xpre_6$};
	\node[draw, label={0}, label={[label distance=4pt, align=left]$\lpre_7 = 0$\\$\upre_7 = 4$}, right of =5] (7) {$\xpre_7$};
	\node[removed, draw, label={0}, below of =7] (8) {$\xpre_8$};
	\node[draw, label={[label distance=4pt, align=left]$\lpre_9 = 0$\\$\upre_9 = 4$}, right of =7] (9) {$\xpre_9$};
	\node[removed, draw, below of =9] (10) {$\xpre_{10}$};
	\node[draw, label={5}, label={[label distance=0pt, align=left]$\lpre_{11} = 5$\\$\upre_{11} = 13$}, right of =9] (11) {$\xpre_{11}$};
	\node[draw, label={0}, label={[label distance=-18pt, below,align=left]$\lpre_{12} = 0$\\$\upre_{12} = 4$}, below of =11] (12) {$\xpre_{12}$};
	
	\draw[->] (1) -- (3) node[midway,above] {1};
	\draw[->] (1) -- (4) node[near end,above] {1};
	\draw[->] (2) -- (3) node[near end,above] {1};
	\draw[->] (2) -- (4) node[midway,above] {-1};
	
	\draw[->] (3) -- (5) node[midway,above=-14pt] {$\max(0, \xpre_3)$};
	\draw[->] (4) -- (6) node[midway,above=-14pt] {$\max(0, \xpre_4)$};
	
	\draw[->] (5) -- (7) node[midway,above] {1};;
	\draw[removed,->] (5) -- (8) node[near end,above] {\cancel{1}};
	\draw[->] (6) -- (7) node[near end,above] {1};
	\draw[removed,->] (6) -- (8) node[midway,above] {\cancel{1}};
	
	\draw[->] (7) -- (9) node[midway,above=-14pt] {$\max(0, \xpre_7)$};
	\draw[removed,->] (8) -- (10) node[midway,above=-14pt] {$\max(0, \xpre_8)$};
	
	\draw[->] (9) -- (11) node[midway,above] {\textcolor{gray}{\cancel{1}} \textcolor{purple}{2}};
	\draw[->] (9) -- (12) node[near end,above] {\textcolor{gray}{\cancel{0}} \textcolor{purple}{1}};
	\draw[removed,->] (10) -- (11) node[near end,above] {\cancel{1}};
	\draw[removed,->] (10) -- (12) node[midway,above] {\cancel{1}};
	\end{tikzpicture}
	\caption{Abstracted network showing the constrains returned by DeepPoly. Note that this is equivalent to having the ReLU unit identified by neurons 8 and 10 merged into the ReLU unit identified by neurons 7 and 9.}
	\label{fig:deeppoly-2}
\end{figure}
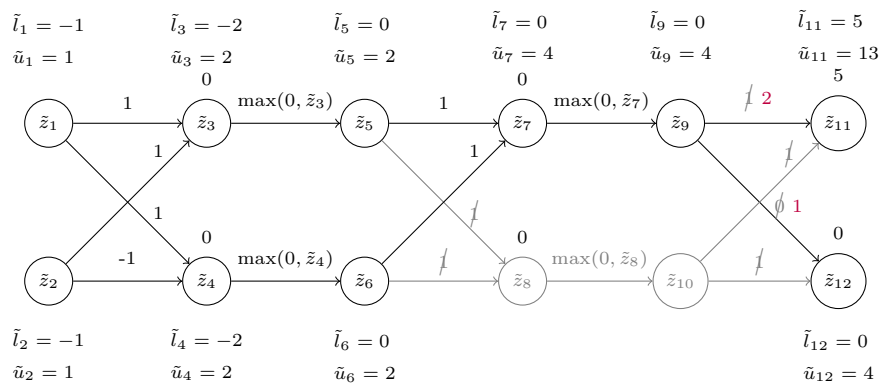

\begin{example}
	Consider the network shown in Figure \ref{fig:deeppoly-2}. The ReLU layers have already been split into two: those (i) computing the affine sum and (ii) computing $\max(0, \cdot)$. The greyed/striked out weights belong to the original network but are not present in the abstract network, in which the ReLU unit identified by neurons 8 and 10 have been merged into the ReLU unit identified by neurons 7 and 9. The two weights coloured purple (between neuron 9 and 11 as well as between 9 and 12) are a result of abstraction, as described in Section \ref{sec:merge}.
	
	We apply the DeepPoly algorithm as discussed in \cite[Section 2]{deeppoly} on the abstracted network to obtain the bounds shown in the figure.
	Neurons 1-7 and 9 are unaffected by the merging procedure.
	For neuron 11, we have $\lpre_{11} = 5$ and $\upre_{11} = 13$, and for neuron 12, $\lpre_{12} = 0$ and $\upre_{12} = 4$.
	Since $\lpre_{11} > \upre_{12}$, we can conclude that the abstracted network is robust, however, we do not know if the lower bound $\lorig_{11}$ from the original network is indeed greater than the upper bound $\uorig_{12}$.
	
	Using Theorem \ref{thm:monster-thm} and the result of DeepPoly on the abstraction, we can compute the bounds $[\lover, \uover]$ such that it contains $[\lorig, \uorig]$, the bounds that would have been computed by DeepPoly for the original network.

	\begin{align*}
	\lover^{(6)} &\leq \lorig^{(6)} \\
	\text{or} \quad\quad \begin{bmatrix}
	\lover_{11} \\
	\lover_{12}
	\end{bmatrix} &\leq \begin{bmatrix}
	\lorig_{11}\\
	\lorig_{12}
	\end{bmatrix}
	\end{align*}
	and
	\begin{align*}
	\uover^{(6)} &\geq \uorig^{(6)} \\
	\text{or} \quad\quad \begin{bmatrix}
	\uover_{11} \\
	\uover_{12}
	\end{bmatrix} &\geq \begin{bmatrix}
	\uorig_{11}\\
	\uorig_{12}
	\end{bmatrix}
	\end{align*}
	
	Assuming the cluster diameter to be $\epsilon$ as defined in Equation \ref{eq:max-cluster-width}, we get $$\uover(6) = \begin{bmatrix}
	13+2\epsilon^{(4)}\\
	4+\epsilon^{(4)}
	\end{bmatrix}$$ and 
	$$\lover(6) = \begin{bmatrix}
	5-2\epsilon^{(4)}\\
	-\epsilon^{(4)}
	\end{bmatrix}$$
	 and therefore,
	\begin{align*}
	\begin{bmatrix}
	5-2\epsilon^{(4)}\\
	-\epsilon^{(4)}
	\end{bmatrix}
	\leq \begin{bmatrix}
	\lorig_{11}\\
	\lorig_{12}
	\end{bmatrix}
	\text{ and }
	\begin{bmatrix}
	\uorig_{11}\\
	\uorig_{12}
	\end{bmatrix}\leq
	\begin{bmatrix}
	13+2\epsilon^{(4)}\\
	4+\epsilon^{(4)}
	\end{bmatrix}
	\end{align*}
	To determine if t $5-2\epsilon^{(4)}=\lover_{11} > \uover_{12}=4+\epsilon^{(4)}$ holds, we need to have a value for $\epsilon^{(4)}$. As this is only a toy example and the neurons were chosen manually and not by clustering, we do not have a value for it here. 
	However, one can see that the proof lifting heavily depends on this value. If it was $\epsilon^{(4)}\geq\frac{1}{3}$, the property could not be lifted, even it would theoretically hold on the original network.
\end{example}

\subsection{Implementation details}\label{app:implementation-details}
We implemented the abstraction technique described in Section \ref{sec:clustering-abstraction} using the popular deep learning library TensorFlow \cite{tensorflow2015-whitepaper} and the machine learning library Scikit-learn \cite{scikit-learn}. For the verification, we used the DeepPoly implementation available in the ERAN toolbox\footnote{Available at \href{https://github.com/eth-sri/ERAN}{github.com/eth-sri/ERAN}}

\end{document}